\documentclass[lettersize,journal]{IEEEtran}

\usepackage{array}
\usepackage{textcomp}
\usepackage{stfloats}
\usepackage{url}
\usepackage{verbatim}
\usepackage{cite}
\usepackage{csquotes}
\usepackage{url}
\usepackage{amsmath,amssymb,amsfonts}
\usepackage{comment}
\usepackage{color}
\usepackage{graphicx}
\usepackage{epsfig} 
\usepackage{textcomp}
\usepackage{xcolor}
\usepackage{flafter}
\usepackage{sidecap}
\usepackage{comment}
\usepackage{caption}
\usepackage{mathtools}
\usepackage{tikz}
\usepackage{psfrag}
\usepackage{float}
\usepackage{epstopdf}
\usepackage{subcaption}
\usepackage{tabularx,booktabs}
\usepackage{multirow}
\usepackage{amsthm}
\usepackage{nomencl}
\usepackage{stackengine}
\usepackage{amssymb}
\usepackage{dsfont}
\usepackage[export]{adjustbox}
\makenomenclature
\newtheorem{theorem}{Theorem}

\newtheorem{lemma}[theorem]{Lemma}
\theoremstyle{definition}

\theoremstyle{remark}
\newtheorem{remark}{Remark}

\newtheorem*{problem*}{{Problem Statement}}
\newcommand\norm[1]{\left\lVert#1\right\rVert}
\usepackage{enumitem} 

\begin{document}

\title{An Energy-Efficient Lyapunov-Based Cooperative Adaptive Cruise Controller for Electric Vehicles}

\author{ Hamed Faghihian$^{1}$, Parisa Ansari Bonab$^{1}$, Arman Sargolzaei$^{1}$
\thanks{$^{1}$ Hamed Faghihian, Parisa Ansari Bonab, and Arman Sargolzaei are with the Department of Mechanical and Aerospace Engineering, University of South Florida, Tampa, FL 33620, USA. Emails: hfaghihian@usf.edu}%
}
\date{April 2026}

\maketitle

\baselineskip 11.53pt

\begin{abstract}
As electric vehicles (EVs) are increasingly adopted as platforms for connected and automated vehicles (CAVs), enhancing their energy efficiency becomes critical. With the emergence of vehicle-to-vehicle (V2V) communication, cooperative adaptive cruise control (CACC) offers improved traffic flow, safety, and energy efficiency by enabling real-time coordination among EVs. However, conventional CACC algorithms neglected acceleration and regenerative braking dynamics in their implementation. To address this gap, this paper proposes a third-order dynamic model for EVs which has been derived from real-world experimental data. We also propose a novel, practical, and energy-efficient Lyapunov-based CACC controller explicitly designed for EV platoons. The proposed controller is requiring lower control gains while ensuring string stability and energy efficiency. To validate its effectiveness, we conduct both simulation and experimental environments, demonstrating that our approach reduces velocity fluctuations, maintains string stability at lower headway times, and improves energy efficiency of the CACC platoon by up to 38.5\% compared to a baseline CACC.
\end{abstract}

\providecommand{\keywords}[1]
{
  \small	
  \textbf{\textit{Keywords---}} #1
}

\section{Introduction} \label{ssec:intro}

In recent years, advancements in sensors, communication technologies, and high-precision positioning technologies push the automotive industries to a significant transformation toward autonomous driving. Connected and autonomous vehicles (CAVs) are equipped with advanced sensors, perception systems, and artificial intelligence-based control algorithms, enabling self-driving capabilities \cite{shen2024sequential}. These vehicles utilize vehicle-to-vehicle (V2V) communication technology to communicate with other vehicles, enhancing coordination, safety, and overall efficiency in modern transportation networks \cite{nourinejad2023optimal}. Since EVs are already equipped with most of the necessary technologies for CAVs, many CAVs are now being built on EV platforms \cite{zhang2021novel}.

Although EVs produce zero emissions, they still face range anxiety issues \cite{zhang2023energy}. Increasing battery capacity to address range issues presents a trade-off, as it raises concerns about the environmental impact and emissions, and higher energy consumption in associated with larger batteries \cite{hung2021regionalized}. To mitigate the impact of limited driving range due to battery constraints, researchers explored various solutions, such as charging stations placement, driving and braking torque distribution and control \cite{FAGHIHIAN20241}, route planning, driving behaviors optimization, and driving motors’ efficiency optimization \cite{zhang2019energy}. While EVs are widely regarded as highly energy-efficient, there remain significant opportunities to further enhance the energy efficiency of electric CAVs (E-CAVs) \cite{faghihian2023energy}. Additionally, as the penetration of E-CAVs increases, the demand for transportation rises due to the appeal of zero-emission vehicles with autonomous features. This surge in demand further highlights the range limitations of E-CAVs \cite{faghihian2024introduction}. Therefore, these facts highlight the urgent need to improve the energy efficiency of E-CAVs.

To increase energy efficiency of E-CAVs, several recent studies conducted. For instance in \cite{zhang2021energy} an energy-saving optimization and control (ESOC) framework proposed, which integrates multiconstraint driving scene data with powertrain control by using traditional model predictive control (MPC) and dynamically adjusting powertrain operations. Also in \cite{zhang2021eco} authors developed a three-power nonlinear controller within their eco-cruise control (ECC) system, leveraging real-time slope estimations to adjust motor torque. Their system demonstrated significant energy savings compared with conventional cruise control. However these works showing promising opportunity, but none of them studied vehicle-following scenarios.
 
 In most studies on vehicle-following scenarios, energy efficiency has not been a priority in many approaches. This addressed by proposing a two-stage decision-making model integrating safety zones and powertrain optimization \cite{zhang2021safe}. Their results demonstrated improved energy efficiency without compromising safety, marking a significant advancement over traditional vehicle-following strategies that prioritize one objective over the other. Despite successful demonstrations of E-CAV capabilities in improving energy efficiency and safety across various scenarios and platooning, most studies have not specifically focused on cooperative adaptive cruise control (CACC). Given that a significant portion of automated driving occurs in this mode, optimizing CACC can have a substantial impact on E-CAV energy efficiency.

The enhancement of adaptive cruise control (ACC) through the use of V2V communication leads CACC, which is a crucial component of CAVs \cite{ansari2024secure}. There is a large number of studies in literature in which authors tried to increase energy efficiency or decrease fuel consumption of CACC through different control method or under different circumstances \cite{chen2021economic,zhang2020cooperative,doi:10.1177/0361198120918572}. The study in \cite{doi:10.1177/0361198120918572} demonstrates that CACC can reduce fuel consumption by controlling acceleration and deceleration variations. The effectiveness of CACC in a mixed scenario with vehicles using ACC and CACC presented in \cite{shen2023energy} in which authors demonstrated that connectivity-enabled controllers can achieve up to 30\% energy savings. In \cite{shao2017robust}, a real-time, practical, and robust eco-CACC system was introduced, achieving up to a 23.5\% improvement in fuel efficiency, highlighting its potential for enhancing energy efficiency. However, these studies do not account for string stability and safety, which are critical for practical deployment.

In \cite{besselink2017string}, a delay-based spacing policy within a CACC employed to maintain string stability under various disturbances to minimize energy losses and enhancing fuel efficiency across the vehicle platoon using proportional-derivative (PD) gains. 
A real-time predictive cruise control (ED-PCC) system for eco-driving was developed, which utilizing Pontryagin's Minimum Principle (PMP) to optimize engine torque and gearshift to decrease fuel consumption reported in \cite{chen2018real}. The proposed method achieved up to 8\% fuel savings by utilizing smoother acceleration profiles and reducing unnecessary braking in car-following scenarios. Although the reported fuel savings were modest, a separate study in \cite{liu2021freeway} analyzing CACC in mixed traffic demonstrated that CACC could reduce fuel consumption by up to 50\% compared to ACC at full market penetration. However, these studies lack checking the effect of headway time on energy efficiency and string stability. In \cite{bian2019reducing}, a multiple-predecessor following (MPF) strategy in a CACC framework proposed, enhancing energy efficiency by maintaining string stability through a constant time headway policy. These studies revealing the effect of headway time and string stability on increasing energy efficiency; however, still they lack showing how string stability will affect energy efficiency. CACC capability in decreasing energy consumption is not limited to passenger vehicles. The study in \cite{smith2020analysis} details the development and implementation of a CACC system on heavy-duty trucks in which string stability demonstrated as a vital aspect that ensures a string of vehicles improves distance attenuation rejection, energy efficiency, and safety measures. Despite these studies discussing how CACC can contribute both in safety and energy efficiency, none of these works targeted E-CAVs and EVs.

Integrating energy efficiency management and safety into CACC for E-CAVs has been reported in some studies. Authors in \cite{chen2020battery} proposed an Eco-CACC system designed for EVs operating near signalized intersections. By optimizing vehicle trajectories, their model significantly reduced energy consumption and travel delay, demonstrating up to 9.3\% energy savings, they extended their works in \cite{chen2024developing} within a smart city framework, achieving up to 23.8\% energy savings. While these works showed the effectiveness of eco-CACC; however, they did not consider platooning.
To improve energy efficiency in vehicle platoons, the work reported in \cite{ma2019predictive} introduced a predictive energy-saving framework using nonlinear model predictive control (NMPC) for CACC platoon. Similarly, \cite{ma2020cooperative} proposed a CACC optimization strategy based on simulated annealing particle swarm optimization (SA-PSO) combined with MPC, enhancing both platoon stability and minimizing energy use. However, most of these studies overlook the switching dynamics between driving and braking in EVs. Unlike internal combustion engine vehicles (ICEVs), EVs exhibit distinct acceleration and deceleration behaviors due to regenerative braking systems (RBS). While \cite{li2016multiple} introduced a switching mode controller for EVs, it relies on predefined controllers rather than a continuous control approach and has not been integrated into CACC. The integration of a truly continuous control strategy into CACC design remains unexplored.
\\

 The absence of a practical EV model that accurately represents both motoring and RBS phases, coupled with the lack of a Lyapunov-based CACC approach, highlights a significant research gap. Moreover, the effect of string stability criteria on energy efficiency has not been systematically explored in the literature.
Unlike previous studies, we propose a novel Lyapunov-based CACC design that ensures safety, string stability, and energy efficiency for EVs. The performance, safety, and string stability of our proposed controller are validated in both simulation and practical environments. Additionally, the results demonstrate a significant reduction in total energy consumption compared to existing controllers.

Thus, the contributions of this paper are
\begin{enumerate}[label=\roman*)]
    \item We Proposed a realistic third-order switched dynamic system differential equation model of an EV (Mustang Mach-E) through a real-world experimental analysis that explicitly models EV acceleration and deceleration separately.
    \item We designed an energy-efficient and novel Lyapunov-based CACC for EVs based on the proposed switched dynamic system model, which is implementable in the real world and requires lower controller gains.The proposed CACC guarantees safety and time-domain string stability at a headway time which is smaller than typically reported for string-stable CACC platoons which implies higher achievable traffic capacity and improved energy efficiency at the system level.
    
    \item  We utilized the Filippov framework for Lyapunov stability analysis and time-domain string stability as a performance measure to evaluate how the proposed CACC enhances both string stability and energy efficiency in EV platoons, and demonstrating its energy efficiency improvement through an experimental comparison with a baseline controller.
\end{enumerate}

The paper is organized as follows: Section \ref{cacc_model} presents the mathematical model of a CACC. Section \ref{problem} outlines the problem statement. The proposed solution is discussed in section \ref{S.proposed}.  Section \ref{s:results} presents the outcomes of the proposed CACC, beginning with the experimental results to identify the EV model. This is followed by simulation and experimental results of the proposed CACC to assess its performance in maintaining a safe and stable distance. The section then analyzes time-domain string stability criteria using a rigorous standard drive cycle and concludes with an experimental energy analysis that compares the energy efficiency of the proposed controller with a baseline CACC. Section \ref{s:conclusion} provides the conclusion and possible future works. Also, the stability analysis of the proposed controller presented in the Appendix section.

\begin{figure*}[] 
\includegraphics[width=\textwidth,height=3.4cm]{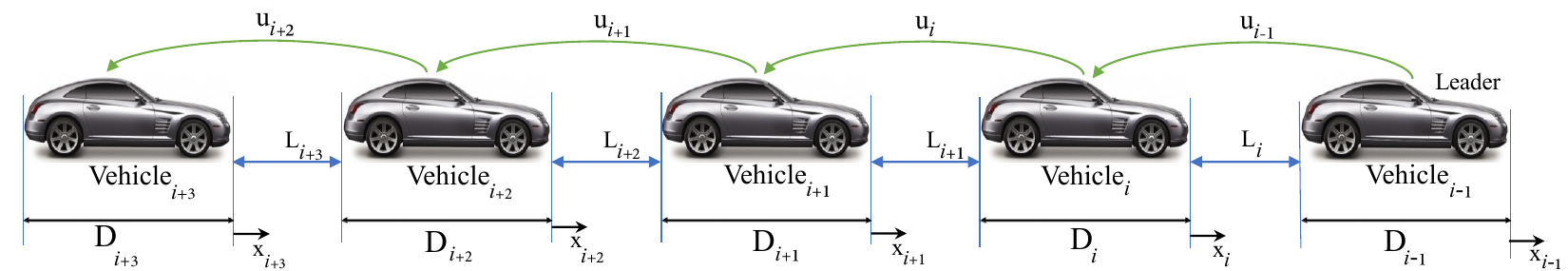}
\caption{Schematic of a string of CACC-equipped vehicles.}
\label{fig:cacc}
\end{figure*}

\section{Mathematical Model of CACC} \label{cacc_model}
Schematic of a string of CACC-equipped in a platoon of vehicles is demonstrated in Figure \ref{fig:cacc}.
For simplicity of analysis, it is assumed that vehicles are homogeneous. 
The dynamic model of $i^{th}$ vehicle ($i\in \{1,\cdots,n\}$, with $n \in \mathbb{N})$ is the number of vehicles) is obtained from experimental analysis as 
  \begin{equation}\label{system_dynamics}
\begin{cases}
\dot{x}_{i}(t) = v_{i}(t),\\
\dot{v}_{i}(t) = a_{i}(t),\\
\dot{a}_{i}(t) = -\gamma_{i}(u_i)a_{i}(t) + \beta_{i}(u_i) u_i(t),
\end{cases}
\end{equation}
where $\gamma_{i}(u_i) \in \mathbb{R}$ and $\beta_{i}(u_i) \in \mathbb{R}$ are defined as
\[
\gamma_{i}(u_i) = \frac{1 + \textit{sgn}(u_i)}{2} \gamma_{i_1} + \frac{1 - \textit{sgn}(u_i)}{2} \gamma_{i_2},
\]
\[
\beta_{i}(u_i) = \frac{1 + \textit{sgn}(u_i)}{2} \beta_{i_1} + \frac{1 - \textit{sgn}(u_i)}{2} \beta_{i_2},
\] where, $x_i\in\mathbb{R}$, $v_i\in\mathbb{R}$, $a_i\in\mathbb{R}$, and $u_i\in\mathbb{R}$ represent the position, velocity, acceleration, and control input, respectively. Also, $\gamma_{i_1}, \gamma_{i_2}\in\mathbb{R}$ and $\beta_{i_1}, \beta_{i_2}\in\mathbb{R}$ are experimentally identified positive constants (see Section \ref{sec:model}) that distinguish acceleration and deceleration dynamics in EVs.

 This distinction in acceleration and declaration behavior in EVs is crucial for accurately modeling EV acceleration and RBS effects, improving the representation of EV behavior during acceleration and deceleration.
 
\begin{remark}
\label{ass:regular_input}
For each vehicle $i$, the control input $u_i (t) : [t_0,\infty) \to \mathbb{R}$
generated by the CACC law in Section~\ref{sec:control_design} satisfies

\begin{enumerate}
    \item $u_i(t)$ is piecewise and continuously differentiable with uniformly
    bounded derivative, i.e., there exists $L_i > 0$ such that
    $|\dot u_i(t)| \leq L_i$.

      \item There exists a constant $\delta_i > 0$ such that between any two consecutive zero crossings of $u_i(t)$, the signal reaches to a magnitude of at
    least $\delta_i$. This means, if $t_k < t_{k+1}$ are two consecutive times with $u_i(t_k) = u_i(t_{k+1}) = 0$ and $\textit{sgn}(u_i)$ changes over these two consecutive times, then there is a time $t^\star \in (t_k,t_{k+1})$ with $|u_i(t^\star)| \geq \delta_i$.
\end{enumerate}
\end{remark} 
Remark~\ref{ass:regular_input} is physically valid since $u_i(t)$
is produced by a continuous CACC controller acting on smooth error and filtering by the drivetrain dynamics, therefore it cannot switch infinitely fast.
Moreover, the traction and regenerative braking controllers always exhibit finite resolution and hysteresis, which imposes a minimum effective torque level associated with $\delta_i$ between two sign changes. In implementation, this corresponds to a small deadband around zero acceleration/torque and prevents high-frequency
dithering near $u_i = 0$.
Because 
$u_i$ is produced by a continuous CACC law and filtered through drivetrain/actuator dynamics, 
it is continuous and bandwidth-limited. Therefore the Remark \ref{ass:regular_input} holds for all scenarios.

\section{Problem Statement} \label{problem}
\subsection{{Safety}}
As shown in Figure \ref{fig:cacc}, vehicles transmit their control signals to followers using wireless communication. This means each vehicle follows the leader ahead of it. The follower vehicle receives data from its lead vehicle through a wireless communication channel, which it uses to match the leader's speed and maintain a safe distance. Thus, the primary objective of this paper's CACC design is to use this communication method to ensure safety and maintain the desired distance between vehicles.
In order to quantify the safety objective, we define distance errors and their first and second derivatives. To define the distance error, first the actual distance between the $i^{th}$ follower and its leader is calculated as 
\begin{equation}\label{Distance_Between_Vehicles}
L_{i}(t) = x_{i-1}(t)-x_{i}(t)-D_{i-1},
\end{equation}
where $L_{i}(t)\in\mathbb{R}$ is the distance between the $i^{th}$ follower and its leader, $D_{i-1}\in\mathbb{R}$ is the length of the lead vehicle. Also, the desired distance between the $i^{th}$ follower and its leader is defined as 
\begin{equation}\label{Desired_Distance_Between_Vehicles}
L_{d_{i}}(t) \overset{\Delta}{=} c_{i}+b_iv_{i}(t),
\end{equation}
where $L_{d_{i}}(t)\in\mathbb{R}$ is the desired distance between vehicles, $c_i \in \mathbb{R}$ is the standstill distance which is a constant, and $b_i \in \mathbb{R}^+$ is the headway time, and $\mathbb{R}^+$  defined as $\mathbb{R}>0$.
Therefore, distance error $e_{i}:[t_0, \infty) \to \mathbb{R}$ is defined as
\begin{equation}\label{Distance_Error}
e_{i}(t)  \overset{\Delta}{=} L_{i}(t)-L_{d_{i}}(t).
\end{equation}
Achieving and maintaining the desired distance is our objective to ensure safety, therefore, as \( t \to \infty \), the distance error \( e_{i}(t) \) must approach zero.

\subsection{String Stability}
Another objective is to ensure the string stability of vehicles in a platoon, focusing on the amplification of oscillations from vehicle $i$ to those that follow. To address this objective and quantify that, the system is defined as string stable if and only if
\begin{equation}\label{ss:1}
||\zeta_i||_m \le ||\zeta_{i-1}||_m,
\end{equation}
where \(\zeta_i \in \mathbb{R}\) can represent the distance error (\(e_i\)), velocity (\(v_i\)), or acceleration (\(a_i\)) being evaluated in the \(i\)-th vehicle, with \(|| . ||_m\) denoting the \(m\)-th norm of the signal.
\subsection{Energy Efficiency}
The CACC has to be designed to reduce total energy consumption. Although there is no direct relationship between any controller parameters and energy efficiency, the designed controller should be verified in terms of energy efficiency. Energy efficiency can be investigated based on the total amount of energy spent through a scenario involving a string of vehicles. In the context of the EV model utilized in this study, the overall energy sourced from the vehicles, \( E_c \in \mathbb{R}\), is described as \cite{10500159}
\begin{equation} \label{equ:eff_1}
E_c = E_{\text{air}} + E_{\text{roll}} + E_{\text{acc}} + E_g - E_{rb},
\end{equation}
where \(E_{\text{air}}\in \mathbb{R}\) represents the energy required to overcome aerodynamic effects, \(E_{\text{roll}}\in \mathbb{R}\) is the energy consumed to overcome rolling resistance, \(E_{\text{acc}}\in \mathbb{R}\) corresponds to the energy expended for vehicle acceleration, \(E_g\in \mathbb{R}\) accounts for the energy consumed in altering the vehicle's elevation, and \(E_{rb}\in \mathbb{R}\) shows the amount of recaptured energy. To minimize the total consumed energy denoted in \eqref{equ:eff_1} in CACC, velocity variations should be minimized \cite{faghihian2023energy}.

\section{Proposed Solution} \label{S.proposed}

\subsection{Error Signals}

To facilitate the stability analysis, the following error signals as the first and second derivative of distance error are defined as
\begin{equation}\label{Errors}
\begin{cases}
e_{i_1}(t) \overset{\Delta}{=} e_{i}(t),\\
e_{i_2}(t)\overset{\Delta}{=}\dot{e}_{i_1}(t),\\
e_{i_3}(t)\overset{\Delta}{=}\dot{e}_{i_2}(t).
\end{cases}
\end{equation}

Using \eqref{Distance_Between_Vehicles} and \eqref{Distance_Error}, $\dot{e}_{i_3}(t)$ can be written as
\begin{equation}\label{Acceleration_error_1}
\dot{e}_{i_3}(t)=\dddot{x}_{i-1}(t)-\dddot{x}_{i}(t)-b_i\dddot{v}_{i}(t),
\end{equation}
using \eqref{system_dynamics} and since $\dddot{v}_i=\ddot{a}_i$, \eqref{Acceleration_error_1} we can have
\begin{equation} \begin{aligned}\label{Acceleration_error_2}
\dot{e}_{i_3}(t)=& -\gamma_{i-1}(u_{i-1})a_{i-1}(t)+\beta_{i-1}(u_{i-1})u_{i-1}(t)\\&+\gamma_{i}(u_{i})a_{i}(t)-\beta_{i}(u_{i})u_{i}(t)\\& -b_i\big(-\gamma_i(u_{i})\dot{a}_i(t)+\beta_i(u_{i})\dot{u}_i(t)\big),
\end{aligned}
\end{equation}
further simplification yields
\begin{equation}\begin{aligned}\label{Acceleration_error_4}
\dot{e}_{i_3}(t)= &\beta_{i-1}(u_{i-1})u_{i-1}(t)\\&-\beta_i(u_{i})(b_i\dot{u}_i(t)+u_i(t)) \\&
-\gamma_{i-1}(u_{i-1})a_{i-1}(t)
+\gamma_{i}(u_{i})a_i(t) \\& 
-\gamma^2_{i}(u_{i})b_ia_i(t)
+b_i\gamma_{i}(u_{i})\beta_i(u_i)u_i(t),
\end{aligned}
\end{equation}
and we can write \eqref{Acceleration_error_4} as
\begin{equation}\begin{aligned}\label{Acceleration_error_5}
\dot{e}_{i_3}(t) &= \beta_{i-1}(u_{i-1})u_{i-1}(t)
\\&-\beta_i(u_{i})(b_i\dot{u}_i(t)+u_i(t)) \\
& -\gamma_{i-1}(u_{i-1})a_{i-1}(t)+\gamma_{i}(u_{i})a_i(t)\\
&+ b_i\gamma_i(-\gamma_ia_i(t)+\beta_i(u_{i})u_i(t)),
\end{aligned}
\end{equation}
considering $a_{i-1}=\ddot{x}_{i-1}$, $a_i=\ddot{x}_i$, and \eqref{system_dynamics}, equation \eqref{Acceleration_error_5} can be written as 
\begin{equation}\begin{aligned}\label{Acceleration_error_6}
\dot{e}_{i_3}(t)= &\beta_{i-1}(u_{i-1})u_{i-1}(t)\\&-\beta_i(u_{i})\big(b_i\dot{u}_i(t)+u_i(t)\big) \\&
-\big[\gamma_{i-1}(u_{i-1})\ddot{x}_{i-1}(t)-\gamma_{i}(u_{i})\ddot{x}_{i}(t)\\&~~~~~~~~~~~~~~~~~~~~~~~-b_i\gamma_{i}(u_{i})(\dot{a}_i(t))\big],
\end{aligned}
\end{equation}

By defining $\phi_i(t) \in \mathbb{R}$ as
\begin{equation}\label{new_equation}
\begin{aligned}
&
\phi_i(t)\triangleq \\&
\big[\gamma_{i-1}(u_{i-1})\ddot{x}_{i-1}(t)-\gamma_{i}(u_{i})\ddot{x}_{i}(t)-b_i\gamma_{i}(u_{i})(\dot{a}_i(t))\big],
\end{aligned}
\end{equation}
therefore, we can write \eqref{Acceleration_error_6} as
\begin{equation} \begin{aligned} \label{its.Acceleration_error_7}
\dot{e}_{i_3}(t)= & \beta_{i-1}(u_{i-1})u_{i-1}(t)-\beta_i(u_{i})(b_i\dot{u}_i(t)+u_i(t)) \\& -\phi_i(t),
\end{aligned}
\end{equation}
we can write \eqref{its.Acceleration_error_7} as
\begin{equation}\label{Acceleration_error_8}
\dot{e}_{i_3}(t)= -\phi_i(t)-P_i(t)+\beta_{i-1}(u_{i-1})u_{i-1}(t).
\end{equation}

where, $P_i(t) \in \mathbb{R}$, defined as
\begin{equation}\label{its.controller.P}
    P_i(t)\triangleq \beta_i(u_{i})\big(b_i\dot{u}_i+u_i\big).
    \end{equation} 

For simplicity of notations, $t$ and $u_i$ is removed in the equations of the following sections.
\vspace{-5pt}
\subsection{Control Design}\label{sec:control_design}
 The filtered control signal, \( P_i(t) \), must stabilize the error dynamics in \eqref{Acceleration_error_8}. Therefore, to derive the control signal \( u_i(t) \) for CACC from \eqref{its.controller.P}, the control law for \( P_i(t) \) was developed based on the Lyapunov stability analysis in Appendix as
\begin{equation}\begin{aligned}\label{control_signal}
P_i= &(\alpha_{i_1}+\alpha_{i_2})e_{i_3}+\beta_iC_ir_{i_2}+\beta_{i-1}u_{i-1} \\& + (\alpha_{i_1}\alpha_{i_2}+1)r_{i_1} - \alpha_{i_2}\alpha_{i_1}^2e_{i_1}-\phi_i,
\end{aligned}
\end{equation}
where $C_i\in\mathbb{R}^+$ is a gain specified by the user, and two auxiliary error signals $r_{i_1}\in\mathbb{R}$ and $r_{i_2}\in\mathbb{R}$ are defined as
\begin{equation}\label{auxiliary_error}
r_{i_1}(t) \overset{\Delta}{=} \dot{e}_{i_1}(t) + \alpha_{i_1}e_{i_1}(t),
\end{equation} 
\begin{equation}\label{auxiliary_error_1}
r_{i_2}(t) \overset{\Delta}{=} \dot{r}_{i_1}(t) + \alpha_{i_2}r_{i_1}(t),
\end{equation}
where $\alpha_{i_1}$and $\alpha_{i_2}\in\mathbb{R}^+$ are defined as user-specified known gains. By selecting appropriate coefficients, the proposed controller guarantees a safe distance between vehicles.
\vspace{-5pt}
\subsection{String Stability Assessment}
 
Given the practical limitations of frequency-domain methods in experimentally evaluating string stability, and the inability to derive complementary sensitivity for switched dynamic system, we propose the adoption of time-domain approaches. This method offers a more feasible framework for experimental analysis, as evidenced by recent studies \cite{alipour2020impact}. Using same principles Considering \eqref{ss:1} the string stability criterion, \(||\Omega_i||_2 \in \mathbb{R}\), can be defined as \begin{equation}\label{ss:our_def}
||\Omega_i||_2\triangleq\frac{||{v}_i(t)||_2}{||v_{i-1}(t)||_2}\leq1, ~v_{i-1}\neq0,~i\geq2.
\end{equation}
The proposed criteria in \eqref{ss:our_def} can be used as an evaluation factor. This criterion used as a reference to evaluate effectiveness of the designed controller in Section \ref{s:results}.

\vspace{-5pt}
\subsection{Energy Efficiency}
 To investigate the effect of the controller on the energy efficiency of an EV, we can compare measurements of the total energy consumed by the EV under the same scenario but with different controllers. The total vehicle energy consumption, \(E_C\), introduced in \eqref{equ:eff_1}, is obtained experimentally by integrating the real-time power over time from the start time, \( t_0 \in \mathbb{R} \), to the end time, \( t_f \in \mathbb{R} \), of the drive cycle, given by
\begin{equation} \label{its.eq.power_integral}
E_C = \int_{t_0}^{t_f} P(t) \, dt,
\end{equation}
where \(P(t) \in \mathbb{R}\), is the Real-time power data obtained from the EV power management system.

\section{Results} \label{s:results}
In the results section, we first demonstrate how we derived the model of the EV. Next, by using the derived model, we present the simulation and experimental results of a test scenario to assess CACC performance and tuning a minimum headway time. Following this, we show results from a rigorous standard drive cycle to evaluate the string stability of the designed controller. Finally, we present the results of a designed test scenario to assess both the energy efficiency and string stability of the proposed controller.
\vspace{-10pt}
 \textbf{ \subsection{\textbf{ Vehicle Model Through Experimental Analysis}}\label{sec:model}}

The EV longitudinal model in \eqref{system_dynamics} identified experimentally using
the Mustang Mach-E experiment platform shown in Figure~\ref{fig:mach-e}. We applied step desired
acceleration/deceleration commands, performed 25 trials in motoring mode and 40 trials
in deceleration mode, scaled and averaged the measured acceleration responses. The resulting mean step responses for $u_i \ge 0$ and $u_i < 0$  shown in Figure~\ref{its.fig.test1}. Fitting a first-order model to each mode yields the values for the model presented in \eqref{system_dynamics} as
$\beta_{i1}= 0.7378$, $\gamma_{i1} = 0.6998$, $\beta_{i2}= 0.9315$, and $\gamma_{i2} = 0.9009$.

\begin{figure}[]  
    \centering
    \includegraphics[width=.45\textwidth]{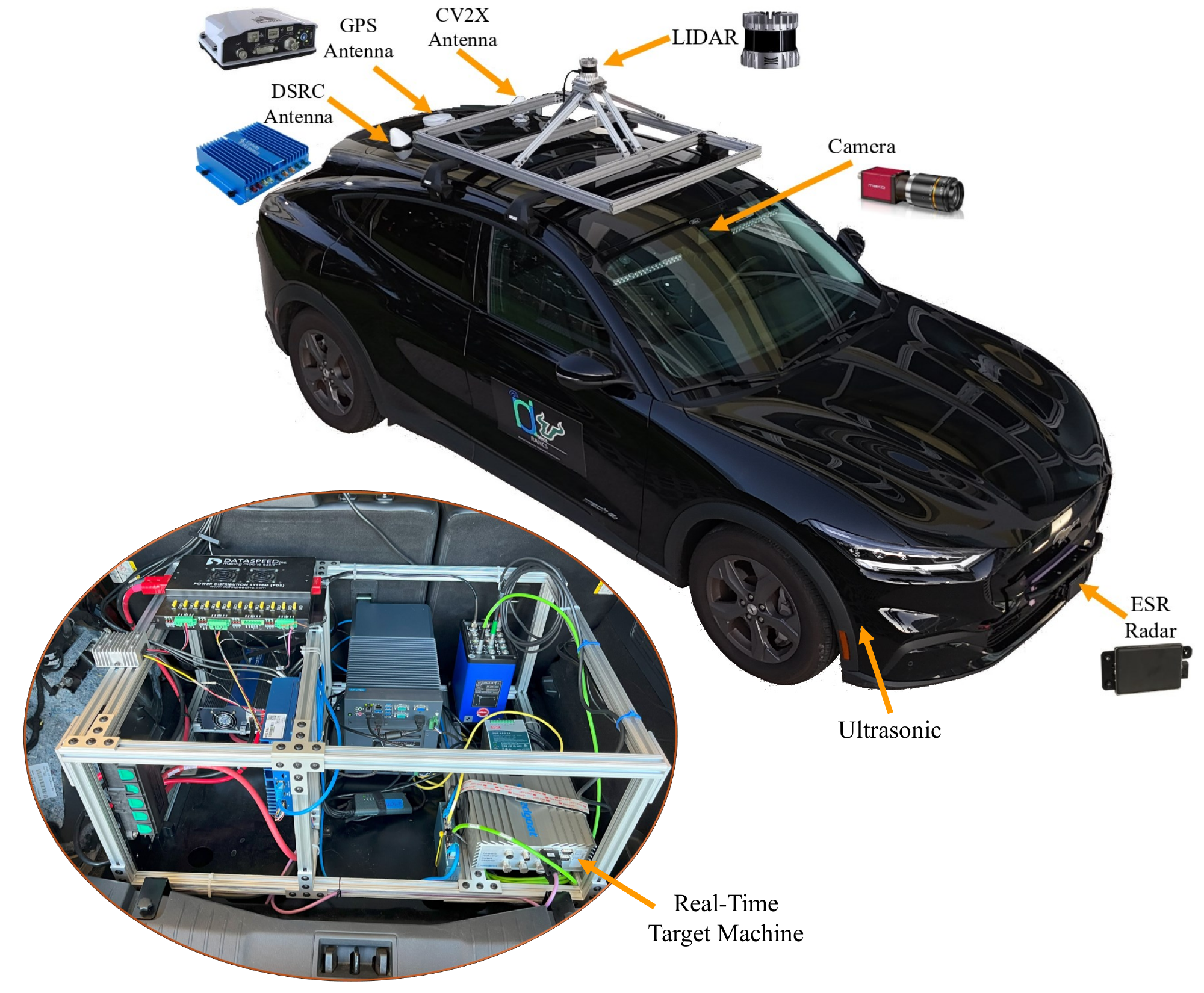}
    \caption{The EV for experimental test equipped with ADAS features, Cellular-vehicle-to-everything (C-V2X) communication, Dedicated short-range communications (DSRC), light detection and ranging (LIDAR), multi-mode electronically scanning radar (ESR), and real-time target machine}.
   \label{fig:mach-e}
\end{figure}

\begin{figure}[] 
     \centering
    \begin{subfigure}{.34\textwidth}
        \includegraphics[width=\linewidth]{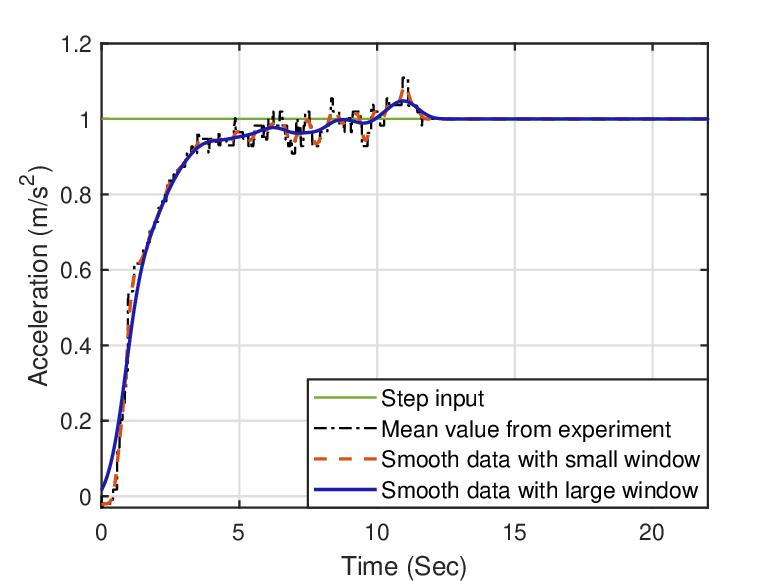}              \caption{}
        \
        \end{subfigure}
    \begin{subfigure}{.34\textwidth}
            \includegraphics[width=\linewidth]{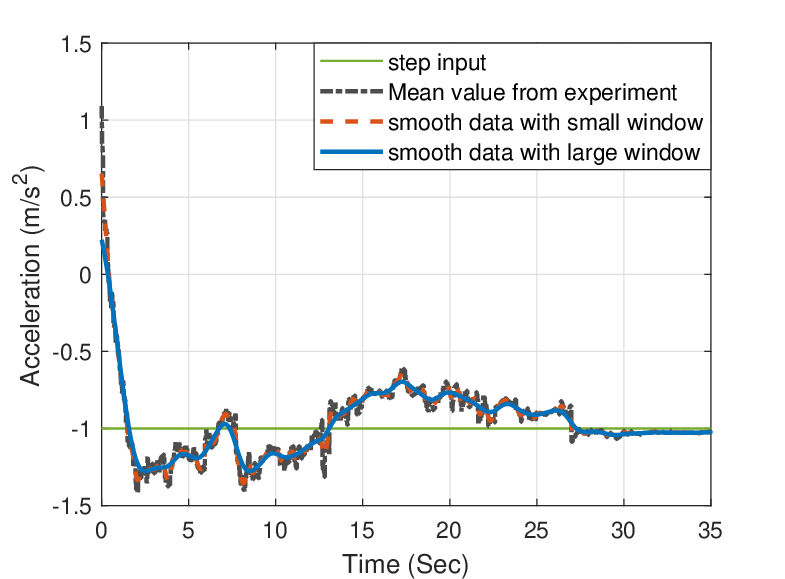}\caption{}\vspace{-5pt}
    \end{subfigure}
      \caption{Obtained experimental acceleration-step response
    (a) In acceleration
    (b) In deceleration}
            \label{its.fig.test1}
        \end{figure}

\subsection{\textbf{CACC Performance Evaluation (Scenario I)}} \label{subsec:sim1}

The controller was tested through simulation and experiment to ensure it handles CACC safety measures and headway time tuning. Lower headway times increase fluctuations, while higher times create undesirable gaps between vehicles. Identifying the minimum headway time that maintains safety with minimal fluctuations improves both energy and road efficiency. To determine the minimum headway time for CACC performance with reasonable error, we began testing at 0.1 seconds and gradually increased it until negligible fluctuations were observed. The desired acceleration for the lead vehicle for this test scenario is shown in Figure \ref{fig:Desired_Acceleration}.

\begin{figure}[]
    \centering
    \includegraphics[width=.35\textwidth]{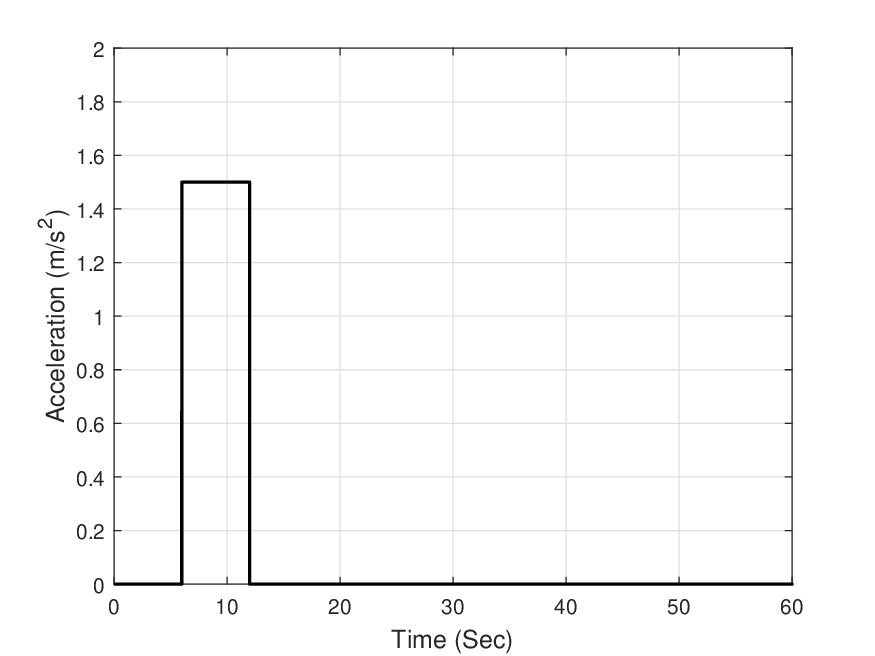}
    \caption{Desired acceleration input for the lead vehicle in the scenario I.}
    \label{fig:Desired_Acceleration}
    \end{figure}
    
\subsubsection{Simulation results for scenario I}

 The speed profile of all vehicles with $0.2$ seconds headway time is shown in Figure \ref{fig:speed_all_1}. As shown, vehicles cannot maintain their leaders' speeds, resulting in fluctuations. These fluctuations are undesirable as they increase the risk of crashes and reduce energy efficiency. Testing various headway times revealed that 0.5 seconds is the minimum headway time to ensure all followers move with negligible fluctuation. The speed of all vehicles with 0.5 seconds of headway time is illustrated in Figure \ref{fig:speed_all}. As shown, even the last vehicles reached the same value for the velocity in a reasonable time, without fluctuations. Position errors of two first and two last vehicles are depicted in Figure \ref{fig:errorFL}.
 \begin{figure}[h]
    \centering
    \includegraphics[width=.35\textwidth]{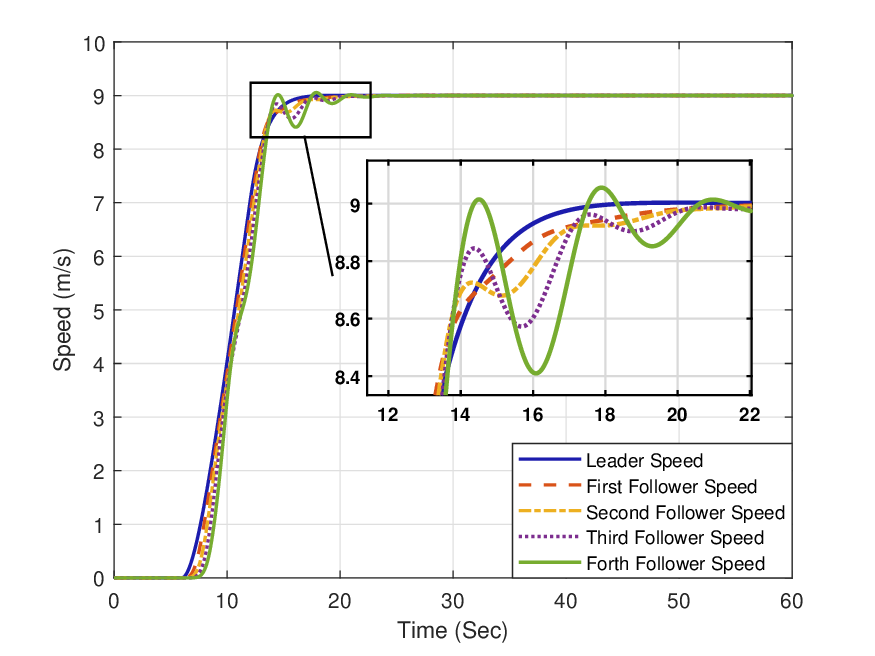}
    \caption{Speed profile of all vehicles with 0.2 seconds headway time in scenario I.}
     \label{fig:speed_all_1}
\end{figure} 
\begin{figure}[]
    \centering
\includegraphics[width=.35\textwidth]{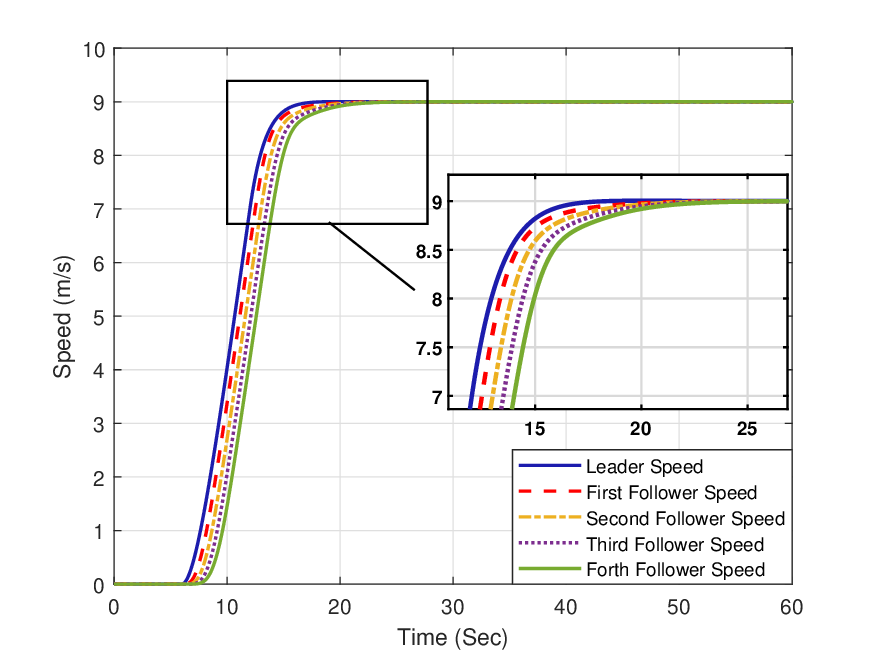}
    \caption{Speed profile of all vehicles with 0.5 seconds headway time in scenario I}.
      \label{fig:speed_all}
\end{figure}

\begin{figure}[] 
     \centering
    \begin{subfigure}{.241\textwidth}
        \includegraphics[width=\linewidth, right]{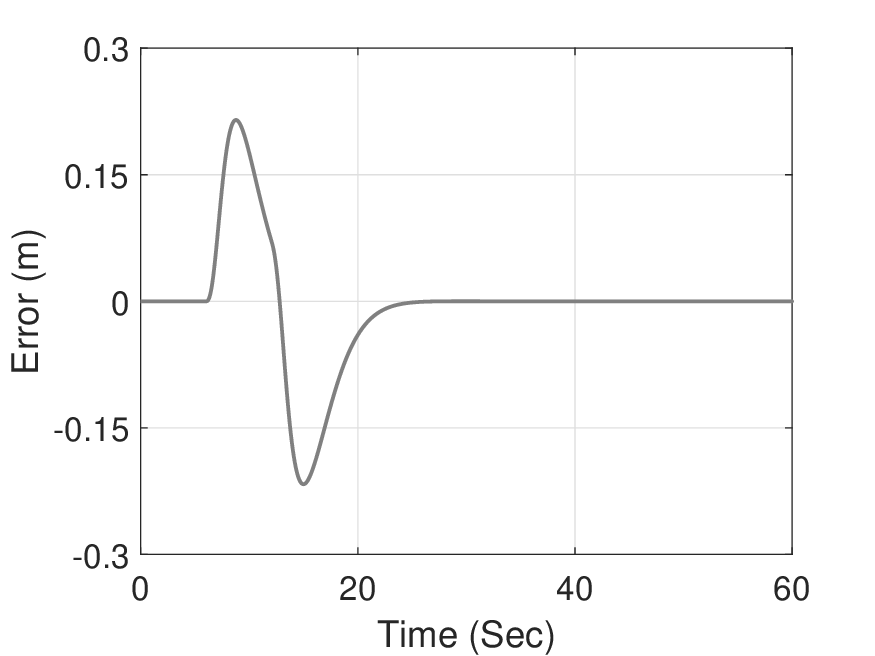}              \caption{}
    \end{subfigure}
    \begin{subfigure}{.241\textwidth}
            \includegraphics[width=\linewidth, left]{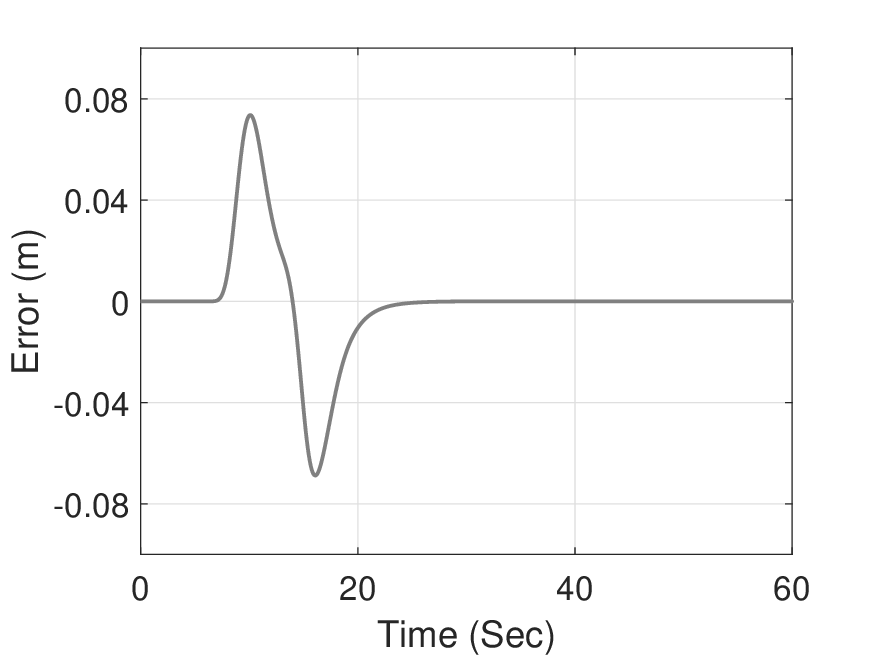}\caption{}
    \end{subfigure}
\caption{(a)  Position Error of two first vehicles 
    (b) Position Error of the two last vehicles. } 
    
        \label{fig:errorFL}
\end{figure}
Based on this figure, the position error quickly stabilizes to zero after a brief fluctuation, indicating that the designed controller can effectively compensate for changes in vehicle velocities and acceleration. The distance error maximum value decreases along the string. The fact that the distance error between the last two vehicles is less than the length of the vehicle ensures that it is possible to control the whole string with this controller.
 
\subsubsection{Experimental results for scenario I}
\

Here, we present the results of implementing the designed controller on the vehicle shown in Figure \ref{fig:mach-e}. We used the vehicle-in-the-loop (VIL) as the second vehicle, as shown in Figure \ref{fig:experiment_sc1}. 
The desired acceleration for the leader is the same as that used in the simulation, shown in Figure \ref{fig:Desired_Acceleration}.
The designed controller and simulated environment are implemented on a real-time machine, enabling realistic online calculations based on live data from vehicle sensors. This setup constitutes an experimental VIL test.. A comparison of the speed of the first follower with the lead vehicle is depicted in Figure \ref{fig:experiment}.

As shown, the VIL experiment shows that the proposed controller is implementable on the EV using vehicle onboard sensors. The comparison of the average root mean square error (RMSE) of the string of five vehicles is shown in Table \ref{cacc.table1}.
\begin{figure}[h]
\vspace{-5pt}
    \centering
\includegraphics[width=.31\textwidth]{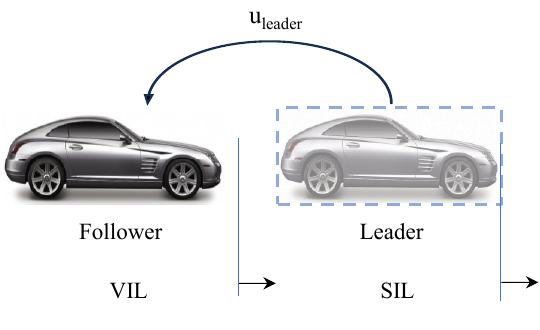}
\caption{Configuration of vehicles for experimental testing in scenario I.}
    \label{fig:experiment_sc1}
\end{figure}
\begin{figure}[h]
\vspace{-5pt}
    \centering
    \includegraphics[width=.36\textwidth]{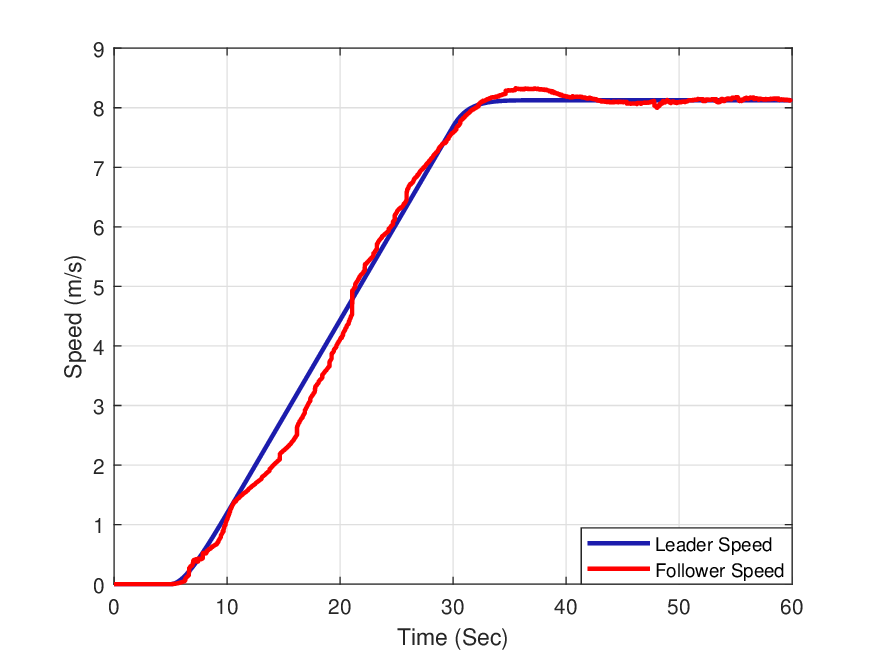}
    \caption{Comparison of the speeds of the leader and the follower in the experimental testing in scenario I. }
    \label{fig:experiment}
\end{figure}
\begin{table}[h!] 
\centering
\caption{Comparison of average of position RMSE in scenario I.}\label{cacc.table1}
\begin{tabular}{c c}
\textbf{Configuration and headway time } & \textbf{Distance RMSE} \\ \hline

\\

Simulation, ~\(b=0.5\) & ~~~0.0696                              \\
Experiment, ~\(b=0.5\)   & ~~~0.9486
\end{tabular}
\end{table}

\subsection{\textbf{String Stability Evaluation by Simulation}}

A standard drive cycle test was employed to assess the string stability of the designed controller. The \(US06\) drive cycle, known for modeling aggressive and rigorous driving, was used to evaluate both string stability and the proposed controller's performance. The leader's desired acceleration was set to match the \(US06\) speed, based on data from \cite{CiteDrive2022}. To analyze the effect of this drive cycle on our proposed controller, the speeds of all vehicles in the CACC with a 0.5-second headway time are shown in Figure \ref{fig:speeds_results}.

\begin{figure}[]
    \centering
    \includegraphics[width=.36\textwidth]{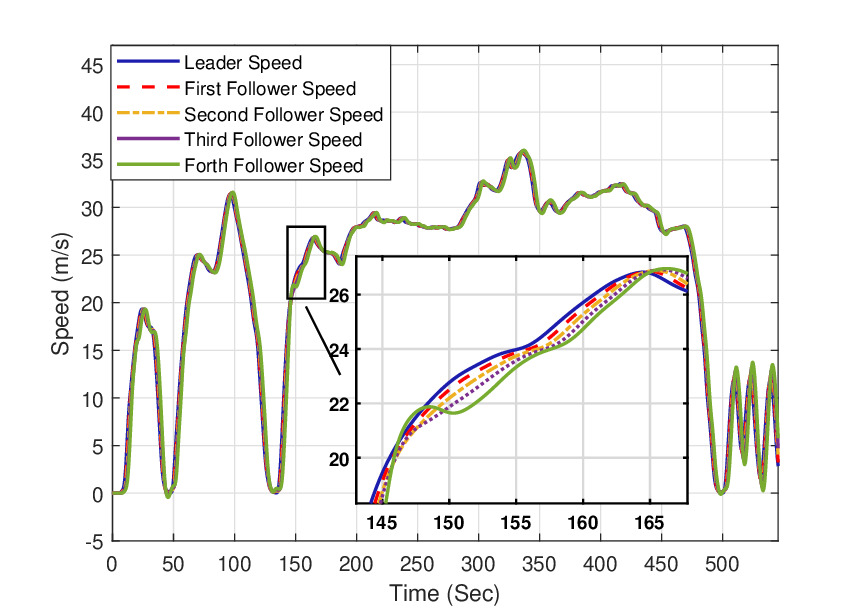}
    \caption{Speed profile of all vehicles under \(US06\) drive cycle test. }
    \label{fig:speeds_results}
    \vspace{-15pt}
\end{figure}
\begin{figure}[h!]
\vspace{-5pt}
    \centering
    \includegraphics[width=.36\textwidth]{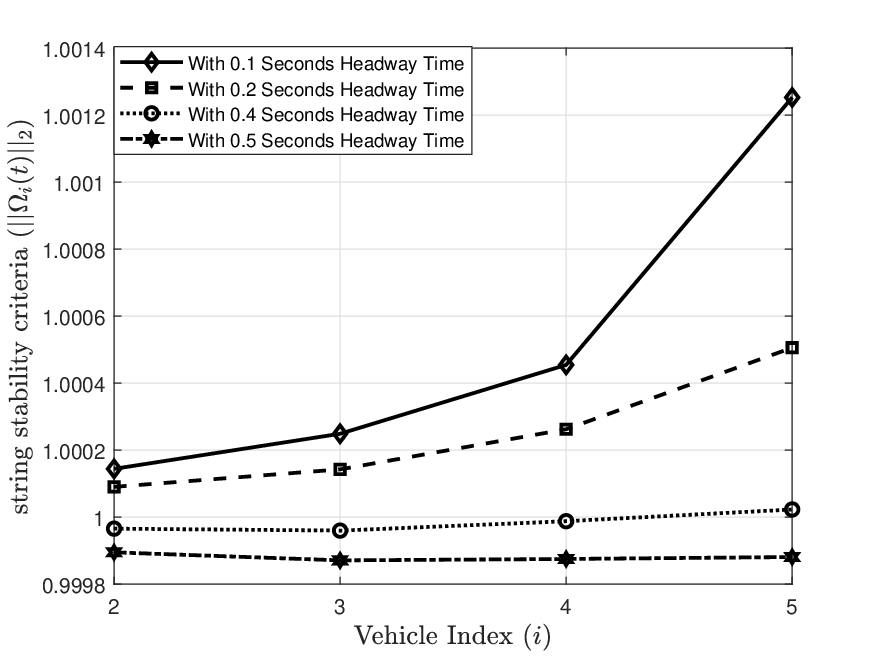}
    \caption{Time-domain string stability for the proposed controller with different headway times, \(b\), using the \(US06\) drive cycle. }
    
    \label{fig:string_stability_analysis}
        \vspace{-15pt}
\end{figure}
Results of the time domain string stability criterion for all of the followers in the CACC with different headway times based on \cite{alipour2017string} are shown in Figure \ref{fig:string_stability_analysis}.

Based on equation \eqref{ss:our_def} the string stability criteria, \(||\Omega_i||_2\), should be less than one and should not increase down the string. As shown, with a 0.2-second headway time, this criterion increases among the string, indicating it cannot guarantee string stability. However, with higher headway times, the string stability criterion does not increase with the number of vehicles, especially for headway times greater than 0.4 seconds, ensuring string stability. At a headway time of 0.5 seconds, the string stability criterion remains within the acceptable limit.
\begin{figure}[]
    \centering
    \includegraphics[width=.36\textwidth]{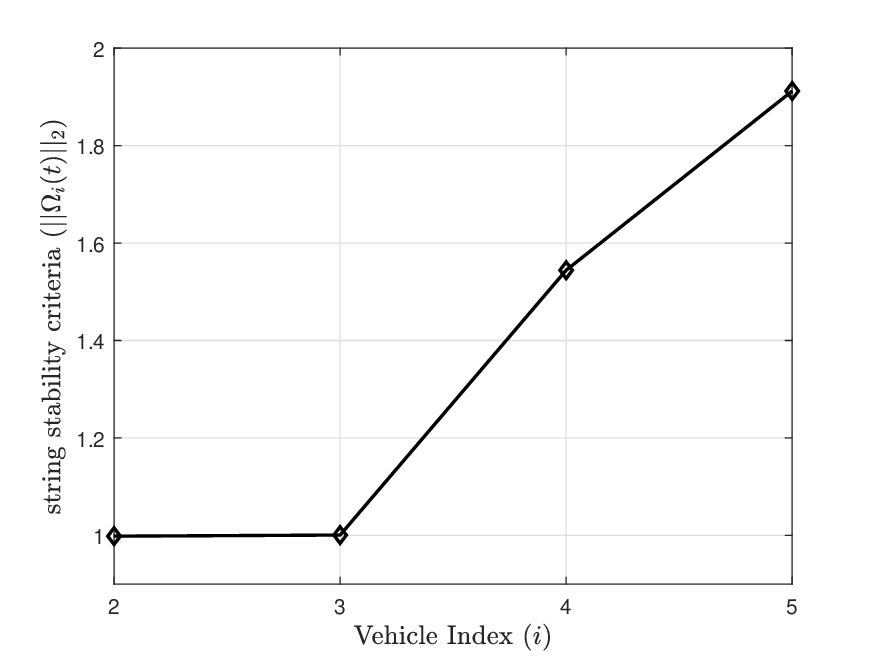}
    \caption{Time-domain string stability with the baseline controller using the \(US06\) drive cycle.}
    \label{fig:stringstabilityanalysispid}
\end{figure}
To compare, the string stability criteria using a baseline PID-based controller with this drive cycle are investigated and shown in Figure \ref{fig:stringstabilityanalysispid}. As shown, the values of the string stability criteria increase significantly down the string of vehicles, indicating unstable behavior with this controller. The average time-domain string stability criteria using the \(US06\) drive cycle are presented in Table \ref{cacc.table2}.

\begin{table}[h]
\centering
\caption{Comparison of average time-domain string stability criteria under the \(US06\) drive cycle simulation.}\label{cacc.table2}
\begin{tabular}{cclc}
\multicolumn{2}{c}{\textbf{Controller}}&  & \textbf{\begin{tabular}[c]{@{}c@{}}Average time-domain\\string stability criteria\end{tabular}} \\ \hline
\multirow{3}{*}{\textbf{Proposed controller}} & \begin{tabular}[c]{@{}c@{}}\(b=0.1\) \end{tabular} &  & 1.0005                                  \\ \cline{2-4} 
  & \begin{tabular}[c]{@{}c@{}}\(b=0.2\)\end{tabular} &  & 1.0003     \\ \cline{2-4} 
  & \begin{tabular}[c]{@{}c@{}}\(b=0.5\)\end{tabular} &  & 0.9999            \\ \hline
\multirow{1}{*}{\textbf{Baseline controller}} & \begin{tabular}[c]{@{}c@{}}\(b=0.5\) \end{tabular} &  & 1.3638
\\ \hline
\end{tabular}
\end{table}

The proposed CACC guarantee safety and time-domain string stability at a headway of $b=0.5$~s, which is smaller than the $0.6-1.5$~s minimum reported for stable CACC platoons, enabling higher capacity platoons without compromising platoon stability and energy efficiency \cite{naus2010cacc,liu2020modifiedcacc,zhang2023delayreview,review_cth_policy}.
\subsection{\textbf{Energy Efficiency and String Stability Evaluation (Scenario II)}}
 
In this test scenario, the acceleration input for the first vehicle (leader) is shown in Figure \ref{fig:cacc_eff_sc}.
 A baseline PID-based CACC was enhanced to compare the effects of the designed controller on string stability and energy efficiency. This scenario was chosen to measure the impact of different controllers on string stability and energy efficiency. Therefore, string stability criteria were investigated through simulation, and experiments assessed energy efficiency.
    
  \begin{figure}[h]
    \centering
    \includegraphics[width=.36\textwidth]{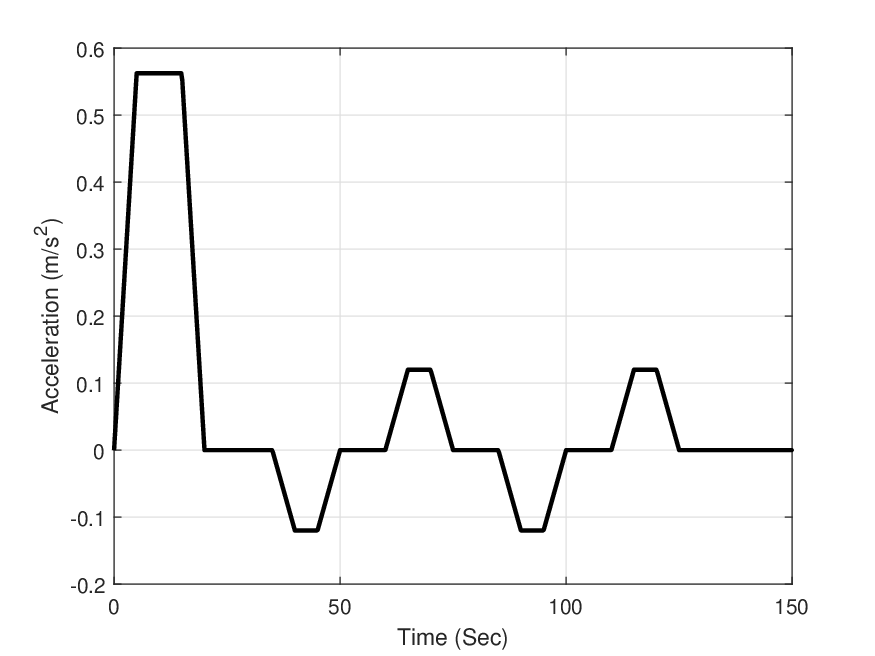}
    \caption{Desired acceleration input for the lead vehicle in scenario II.}
    \label{fig:cacc_eff_sc}
\end{figure}

\subsubsection{Simulation results for scenario II}
\

 The speed profiles of five vehicles in the simulation with the baseline controller and our proposed controller with \(b=0.5\) are shown in Figure \ref{fig:3_our_pid_sim}.

\begin{figure}[h] 
     \centering
    \begin{subfigure}{.36\textwidth}
        \includegraphics[width=\linewidth, right]{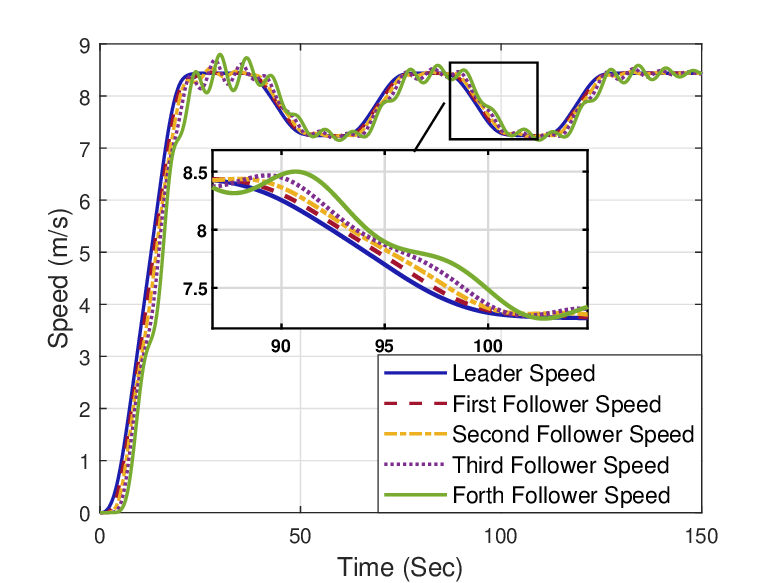}              \caption{}
        \
        \
    \end{subfigure}
    \begin{subfigure}{.36\textwidth}
            \includegraphics[width=\linewidth, left]{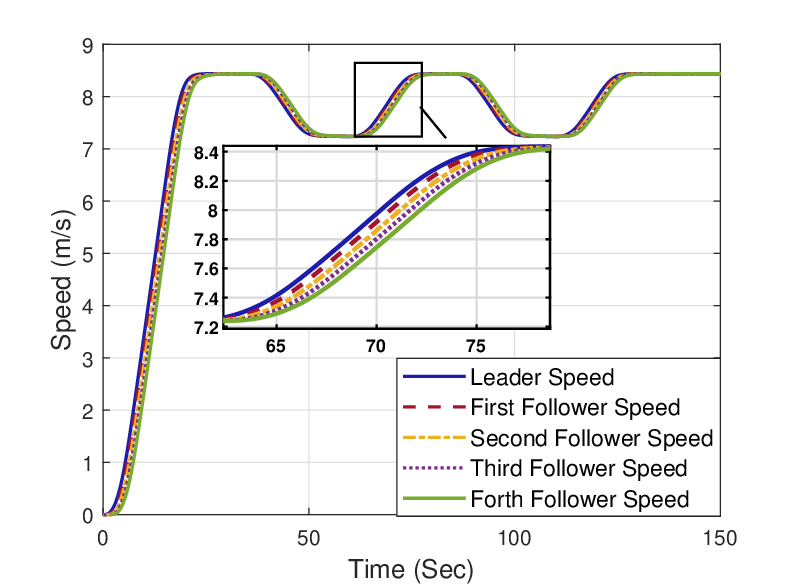}\caption{}
    \end{subfigure}
     \caption{Speed profiles of five vehicles in simulation in scenario II with 
    (a) the baseline controller,
    (b) the proposed controller.
       }
        
        \label{fig:3_our_pid_sim}
            \vspace{-15pt}
\end{figure}

To investigate this more, the string stability measure was studied using this scenario with these two controllers, this has been shown in Figure \ref{fig:3_ss}.
\begin{figure}[h]
    \centering
    \includegraphics[width=.37\textwidth]{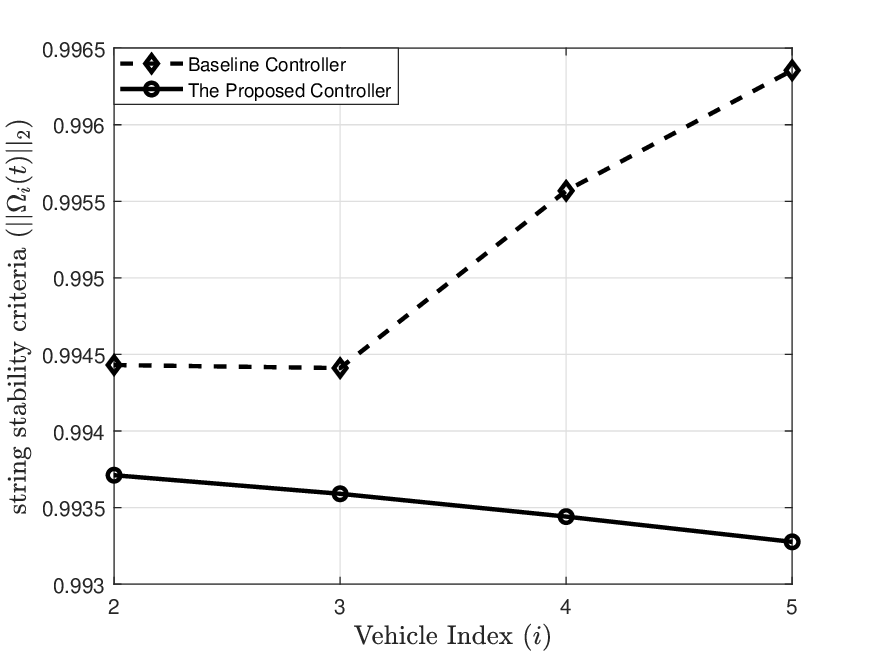}
    \caption{Comparison of time-domain string stability criteria for scenario II.}
    \label{fig:3_ss}
    \vspace{-10pt}
\end{figure}
As shown, in our proposed controller, the string stability criteria decrease down the string of vehicles, whereas the baseline controller shows an increase in the string stability criteria.
\subsubsection{Experimental results for scenario II}
\

To assess the experimental effect of the proposed controller on energy efficiency and string stability, we considered a string of five vehicles. The leader and the first three vehicles were modeled in a software-in-the-loop (SIL) environment, while the last vehicle was modeled in a VIL environment. This setup is depicted in Figure \ref{fig:cacc_eff_sc_exp}. The comparison of velocity profiles of the string of vehicles under this scenario in the experimental setup is shown in Figure \ref{fig:3_our_pid_exp}.
 
  \begin{figure}[h]
    \centering
\includegraphics[width=.516\textwidth]{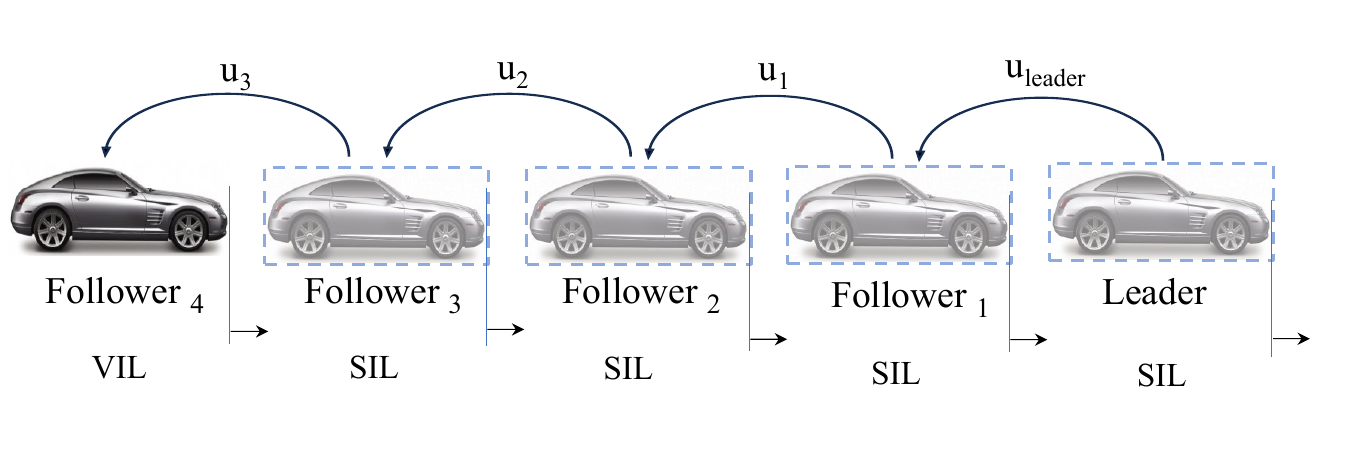}
    \caption{Configuration of vehicle string for experimental testing in scenario II}
    \label{fig:cacc_eff_sc_exp}
\end{figure}

\begin{figure}[h!] 
     \centering
    \begin{subfigure}{.35\textwidth}
        \includegraphics[width=\linewidth, right]{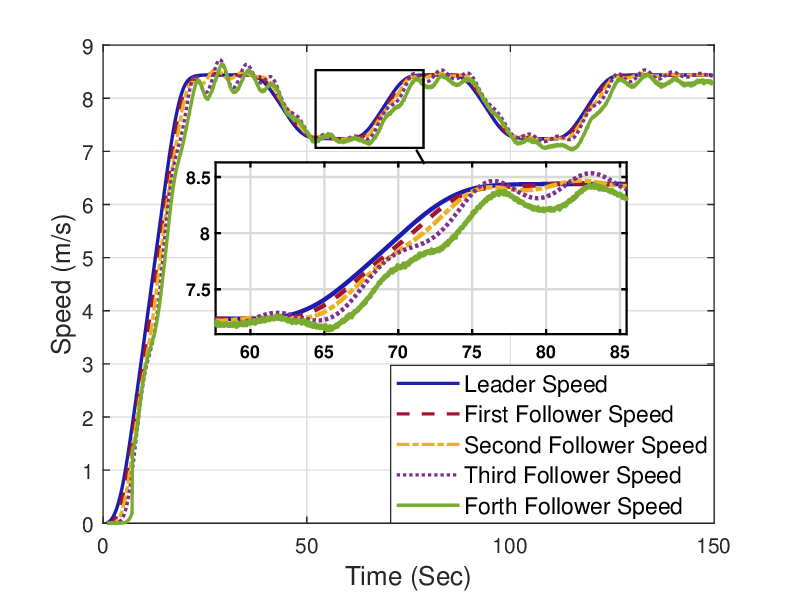}              \caption{}
        \
        \end{subfigure}
    \begin{subfigure}{.35\textwidth}
            \includegraphics[width=\linewidth, left]{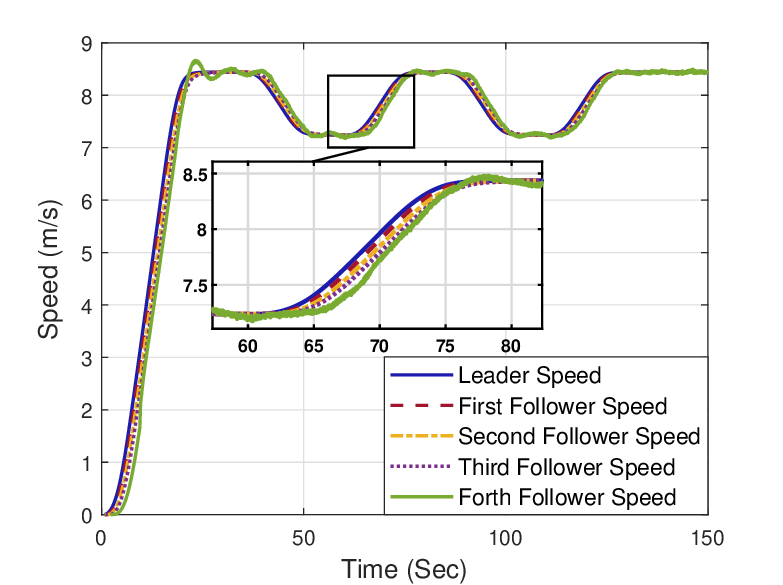}\caption{}
    \end{subfigure}
      \caption{Velocity profiles of five vehicles in experiment test scenario II with 
    (a) the baseline controller
    (b) the proposed controller
        }
        \label{fig:3_our_pid_exp}
            \vspace{-15pt}

\end{figure}

The Real-time power data, \(P(t)\), was obtained from vehicle sensors, measuring total power flow to the power management system and recorded in CAN messages. For test scenario II, the power consumption profile of both the baseline and proposed controllers is shown in Figure \ref{its.fig:experiment_eff_base}. In this plot, positive power values indicate energy consumption from the vehicle's energy management system, while negative values correspond to RBS, where energy is recaptured and fed back into the power management system. Since vehicle accessory systems, such as air conditioning, always consume a constant amount of power, we can assume that the difference in total net energy consumed between these two graphs illustrates how our proposed controller, which provides better string stability, enhances energy efficiency.
The calculated \( E_C \) values, derived from the data in Figure \ref{its.fig:experiment_eff_base} using \eqref{its.eq.power_integral}, are reported in Table \ref{cacc.table3} which reveals that improving string stability can reduce energy consumption by up to 38.5\%.

\begin{figure}[!h]
    \centering
    \includegraphics[width=.36\textwidth]{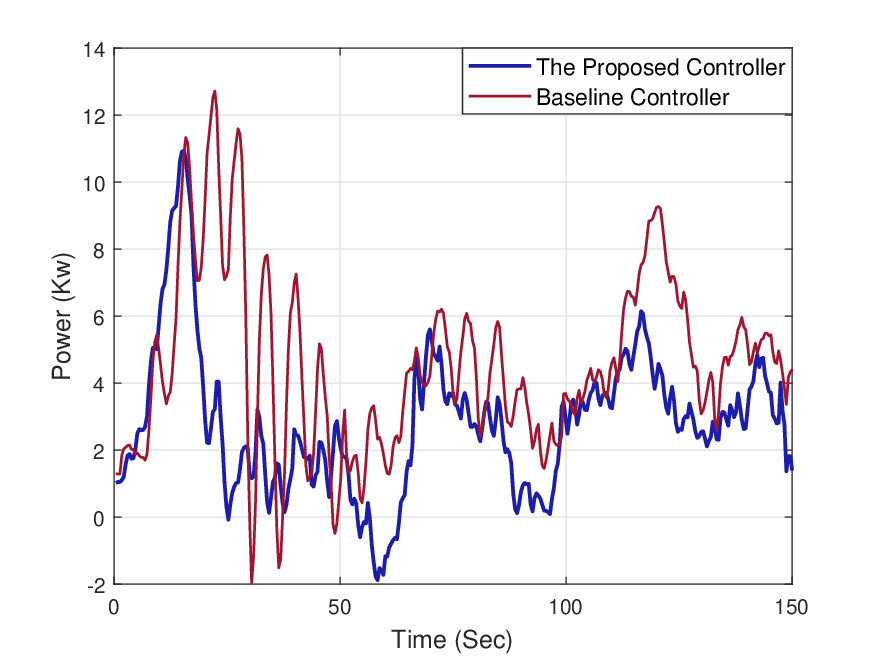}
    \caption{Comparison of average time-domain string stability criteria and total energy consumption between the proposed controller and the baseline controller, based on scenario II}
    \label{its.fig:experiment_eff_base}
\end{figure}

\begin{table}[h]\centering
\caption{Comparison of average time-domain string stability criteria and the average energy consumed in a platoon of 10 vehicles with the proposed controller and the baseline controller, based on the second scenario.}\label{cacc.table3}
\begin{tabular}{ccc}

\textbf{Controller}                                             & \textbf{\begin{tabular}[c]{@{}c@{}}Average time-domain\\string stability criteria  \end{tabular}} & \textbf{\begin{tabular}[c]{@{}c@{}}Average energy\\  consumed (KJ)  \end{tabular}} \\ \hline 
\begin{tabular}[c]{@{}c@{}}Proposed Controller\\\end{tabular} & 0.9921            & 421.07                          \\
\hline
Baseline controller               & 0.9979 
& 684.75
\\
\hline
\end{tabular}
\end{table}

  Platoon energy saving is a key metric. Prior CACC studies report about $4$--$10\%$ fuel reduction for conventional vehicles and $10$--$20\%$ energy savings for EV platoons by suppressing speed oscillations \cite{naus2010cacc,milanes2014cacc,zhao2020ev,liu2022evcacc}. In our results (Table~\ref{cacc.table3}), the proposed controller shows total energy 38.5\% improvement. For a representative $10$ EV platoon, this corresponds to the same system-level reduction, which is more than typical CACC gains reported in the literature.

\section{Conclusion and Future Works} \label{s:conclusion}
In CACC, the controller algorithm is crucial for ensuring string stability, energy efficiency, and robustness to fluctuations. However, existing models lacks accurately capture the distinct acceleration and regenerative braking dynamics of EVs, which are critical for precise control and energy optimization. To address this limitation, we proposed a new third-order EV model derived from real-world experimental data, explicitly distinguishing between motoring and regenerative braking phases. Based on this refined model, we developed a Lyapunov-based controller. The designed controller was shown to stabilize a string of at least five vehicles with a relatively small headway time in both simulations and experiments, thereby reducing the total energy consumption up to 38.5 \%. As the same CACC law governs all of vehicles, the per-vehicle energy reduction observed experimentally can be interpreted as the average saving per vehicle in a CACC platoon, affecting the entire platoon energy savings. It also highlighted that time-domain string stability analysis can show how the controller aids in energy efficiency, with lower average string stability criteria indicating lower energy consumption. This analysis can be based on standard drive cycles, like \(US06\), or any proposed cycle representing speed variations. Despite EVs being less impacted by speed fluctuations due to regenerative braking, such fluctuations can still lead to significant energy perturbations. Using the simulation method presented and evaluated in this paper, we have established a test and verification method for tuning headway time to ensure both string stability and energy efficiency.

One of the possible future works for this study could be investigating the effect of disturbances, like communication delay and noise in measuring data, and on necessary headway time for string stability and ensuring energy efficiency. Additionally, one can extend our analysis by deriving explicit frequency-domain conditions for string stability to gain deeper insights into the impact of control parameters on disturbance propagation in EV platoons.
\appendix \section*{Stability Analysis} \label{its.S:Stability}
\vspace{-5pt}
\subsection{Preliminary Calculations}

To prove stability, it is essential to first derive the relationships for the error dynamics, which will be utilized in the stability proof.
Taking the time derivative of the position error in \eqref{auxiliary_error} results
\begin{equation}\label{errors_derivative}
\dot{e}_{i_1}=r_{i_1}-\alpha_{i_1}e_{i_1},
\end{equation}
time derivative of auxiliary errors from \eqref{auxiliary_error_1} are obtained as
\begin{equation}\label{errors_derivatives}
\dot{r}_{i_1}=r_{i_2}-\alpha_{i_2}r_{i_1},
\end{equation}
and 
\begin{equation}\label{errors_derivatives_1}
\dot{r}_{i_2}=\ddot{r}_{i_1}+\alpha_{i_2}\dot{r}_{i_1},
\end{equation}
substituting \eqref{auxiliary_error} and \eqref{Errors} in \eqref{errors_derivatives_1} results 
\begin{equation}\label{errors_derivatives_2}
\dot{r}_{i_2}=(\dot{e}_{i_3}+\alpha_{i_1}e_{i_3})+\alpha_{i_2}(e_{i_3}+\alpha_{i_1}\dot{e}_{i_1}),
\end{equation}
and substituting \eqref{errors_derivatives}, yields
\begin{equation}\label{errors_derivatives_3}
\dot{r}_{i_2}=\dot{e}_{i_3}+\alpha_{i_1}e_{i_3}+\alpha_{i_2}e_{i_3}+\alpha_{i_2}\alpha_{i_1}r_{i_1}-\alpha_{i_2}\alpha_{i_1}^2e_{i_1},
\end{equation}
substituting $\dot{e}_{i_3}$ from \eqref{Acceleration_error_8} and more simplification make \eqref{errors_derivatives_3} as
\begin{equation}\begin{aligned}\label{errors_derivatives_4}
\dot{r}_{i_2}= & -\phi_i+ e_{i_3}(\alpha_{i_1}+\alpha_{i_2})-P_i+\beta_{i-1}u_{i-1} \\& + \alpha_{i_2}\alpha_{i_1}r_{i_1}-\alpha_{i_2}\alpha_{i_1}^2e_{i_1},
\end{aligned}
\end{equation}
substituting $P_i$ from \eqref{control_signal} results
\begin{equation}\label{errors_derivatives_5}
\dot{r}_{i_2}= -C_i\beta_i{r}_{i_2}-r_{i_1}.
\end{equation}
\vspace{-15pt}
\subsection{Notion of Solution, and Well-Posedness}
The longitudinal EV dynamic in \eqref{system_dynamics} is piecewise smooth and depend on the sign of the control input $u_i$ through
the coefficients $\gamma_i(u_i)$ and $\beta_i(u_i)$. As $\textit{sgn}(u_i)$ introduces a discontinuity on the switching surface $\{u_i = 0\}$, so the closed-loop system is a switched (nonsmooth) system. On each region where $u_i(t) \neq 0$, the  right hand side (RHS) of
\eqref{system_dynamics} is smooth and locally
Lipschitz in the state. At $u_i = 0$, we interpret the dynamics in the sense of
Filippov differential inclusions, i.e., $ \dot y \in F(y,t),$ where $F$ denotes the Filippov set-valued map associated with the RHS of \eqref{system_dynamics}.
On domain where $u_i \neq 0$, Filippov solutions coincide with classical Carath\'eodory solutions~\cite{Liberzon2003}. While, on the switching surface $u_i = 0$, the discontinuity in $\textit{sgn}(u_i)$ is considered as the convexification described in Filippov's construction~\cite{Filippov1988}, leading to absolutely continuous trajectory that may slide along the surface if needed. The generalized time derivative $\dot{\tilde V}_L$ used in the stability analysis is therefore understood in the Filippov sense. Using set-valued operator $K[\cdot]$ for non-smooth functions. This is precisely the framework adopted for the proof of stability, where
$K[\textit{sgn}(u_i)] = \mathrm{SGN}(u_i) \subseteq [-1,1]$ is used to bound $\dot{\tilde V}_L$ on and near the switching surface.

\begin{lemma}[Absence of Zeno switching]
\label{lem:non_zeno}
Consider the closed-loop dynamics of vehicle $i$ given by \eqref{system_dynamics}, with the control input $u_i(t)$ satisfying Remark~\ref{ass:regular_input}. Then, on any finite
time interval $[t_0,t_f]$, the switching signal
$\sigma_i(t) := \textit{sgn}(u_i(t)) \in \{-1,0,+1\}$ can switch only finitely many times. In particular, the closed-loop system admits no Zeno executions.
\end{lemma}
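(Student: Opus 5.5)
The argument will rest only on the two regularity properties in Remark~\ref{ass:regular_input}, combined through a counting argument. The idea is that every sign reversal of $u_i$ costs a definite amount of time, so only finitely many reversals fit into $[t_0,t_f]$. All discontinuities of the right-hand side of \eqref{system_dynamics} come from $\gamma_i(u_i)$ and $\beta_i(u_i)$ through $\sigma_i=\textit{sgn}(u_i)$. Hence it suffices to bound the number of changes of $\sigma_i$ on $[t_0,t_f]$.

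\textbf{Step 1: a minimum dwell time between sign reversals.} Let $t_k<t_{k+1}$ be consecutive zeros of $u_i$ across which the sign changes. By item 2 of Remark~\ref{ass:regular_input}, there is $t^\star\in(t_k,t_{k+1})$ with $|u_i(t^\star)|\ge\delta_i$. By item 1 and the mean value theorem applied on $[t_k,t^\star]$ (piece by piece if $u_i$ is only piecewise $C^1$, using continuity of $u_i$),
\begin{equation*}
\delta_i\le|u_i(t^\star)-u_i(t_k)|\le L_i\,(t^\star-t_k).
\end{equation*}
It follows that $t_{k+1}-t_k>t^\star-t_k\ge\delta_i/L_i=:\tau_i>0$. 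So consecutive sign reversals are separated by at least $\tau_i$, and $[t_0,t_f]$ contains at most $\lceil (t_f-t_0)/\tau_i\rceil+1$ reversals.

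\textbf{Step 2: switches that are not reversals.} The switching signal can also change value $+1\to0\to+1$ (or $-1\to0\to-1$) without a sign reversal, and item 2 does not bound these. I would handle them using the Filippov convention adopted above. Isolated touches of $\{u_i=0\}$ form a measure-zero set on which the Filippov set $F$ coincides with the convex hull of the two neighboring modes. Intervals on which $u_i\equiv0$ are a single sliding mode, not a switch. Since $u_i$ is continuous, $\sigma_i$ therefore changes mode only at sign reversals: between two reversals $u_i$ keeps one weak sign, so the active vector field (including its Filippov closure) is the same. After identifying $\sigma_i$ with its mode, the number of mode changes is bounded by the count from Step 1.

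\textbf{Step 3: conclusion.} Each inter-switch interval has length at least $\tau_i$, so there is no accumulation point of switching times in finite time, that is, no Zeno execution. On each inter-switch interval the dynamics are smooth and locally Lipschitz, so the Carath\'eodory solutions can be concatenated into an absolutely continuous Filippov solution on $[t_0,t_f]$.

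The main obstacle is Step 2. If $u_i$ touches zero infinitely often without changing sign (for example $u_i=t^3\sin^2(1/t)$), the literal signal $\textit{sgn}(u_i)\in\{-1,0,+1\}$ switches infinitely often, and item 2 alone does not prevent this. I would first try to argue that such touchings do not change the Filippov mode, as above. If the literal counting of $\sigma_i$ is required, I would instead read Remark~\ref{ass:regular_input} (using the deadband/hysteresis interpretation given after it) as applying to every zero crossing, and then repeat Step 1 for all zeros.
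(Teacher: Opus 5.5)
Your Step 1 is the paper's entire proof. The paper takes consecutive zero crossings with a sign change, uses item 2 of Remark~\ref{ass:regular_input} to get $t^\star$ with $|u_i(t^\star)|\ge\delta_i$, and applies the mean value theorem with $|\dot u_i|\le L_i$. It does this on both $[t_k,t^\star]$ and $[t^\star,t_{k+1}]$, so it gets $\kappa_{\min}=2\delta_i/L_i$ rather than your $\delta_i/L_i$; this changes nothing in the count. The paper then equates ``each change in $\textit{sgn}(u)$'' with such a reversal and never considers the $+1\to0\to+1$ switches you raise in Step 2. Your point is correct. Your example $t^3\sin^2(1/t)$ shows that if item 2 only constrains reversals, the literal claim about $\sigma_i\in\{-1,0,+1\}$ does not follow.

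Your first resolution of Step 2, however, does not hold up. In the paper's own regularization, $K[\textit{sgn}(u_i)]=\mathrm{SGN}(u_i)=[-1,1]$ on $\{u_i=0\}$. So the inclusion there is the convex hull of both modes, whether $u_i$ crosses zero or only touches it, and it is not ``the same'' as the one-mode field. Entering or leaving an interval where $u_i\equiv0$ is a genuine change of the active right-hand side, and of $\sigma_i$. The reversal reading of item 2 does not stop infinitely many such intervals or touches from accumulating between two reversals. That argument therefore just redefines ``switch'' as ``reversal,'' which is also the paper's tacit convention.

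Your fallback is the argument that actually proves the statement, and it is cleanest phrased through components. Require item 2 on every bounded connected component $(t_k,t_{k+1})$ of the open set $\{t:u_i(t)\neq0\}$. The two-sided estimate then gives each component length at least $2\delta_i/L_i$, so only finitely many components meet $[t_0,t_f]$. Hence $\{u_i=0\}\cap[t_0,t_f]$ is a finite union of points and closed intervals, and $\sigma_i$ is piecewise constant with finitely many jumps. This version also drops an assumption implicit in both your Step 1 and the paper: that zeros near a reversal are isolated, so that ``consecutive zeros'' is well defined.
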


\begin{proof}
Fix a finite interval $[t_0,t_f]$ and omit the subscript $i$. Consider two consecutive zero crossings $t_k < t_{k+1}$ of $u(t)$ such that $\textit{sgn}(u)$ changes over $(t_k,t_{k+1})$. By Remark~\ref{ass:regular_input}(ii), there exists $t^\star \in (t_k,t_{k+1})$ with $|u(t^\star)| \geq \delta$. Without loss of
generality, assume $u(t^\star) \geq \delta$; the case $u(t^\star) \leq -\delta$
is analogous. Using the mean value theorem and the derivative bound $|\dot u(t)| \leq L$, we
obtain
\[
    t^\star - t_k
    \;\geq\; \frac{|u(t^\star) - u(t_k)|}{L}
    \;=\; \frac{|u(t^\star)|}{L}
    \;\geq\; \frac{\delta}{L},
\]
since $u(t_k)=0$. Similarly, because $u(t_{k+1}) = 0$, we have
\[
    t_{k+1} - t^\star
    \;\geq\; \frac{|u(t^\star) - u(t_{k+1})|}{L}
    \;\geq\; \frac{\delta}{L}.
\]
Therefore,
\begin{equation}
    t_{k+1} - t_k \;\geq\; \frac{2\delta}{L} \;=\;:~~ \kappa_{\min} > 0,
    \label{eq:min_dwell_time}
\end{equation}
where, $\kappa_{\min} \in \mathbb{R}$ is minium dwell time. Equation~\eqref{eq:min_dwell_time} shows that each change in $\textit{sgn}(u(t))$ requires at least $\kappa_{\min}$ units of time. Hence, on any finite interval $[t_0,t_f]$ the number of possible sign changes is upper bounded by $(t_f - t_0)/\kappa_{\min}$ and is therefore finite. Consequently, the switching signal $\sigma_i(t)$ cannot exhibit infinitely many switches in finite time, and ``Zeno'' executions are excluded.
\end{proof}

\subsection{Proof of Stability} \label{its.S:Stability_2}
Let the following be the sufficient conditions for the stability
\begin{equation}\begin{aligned}\label{sufficient-conditions}
&
\alpha_{i_2}>\frac{\varepsilon_{1_i}}{2},~~
\alpha_{i_1}>\frac{1}{2\varepsilon_{1_i}},~~
C_i>0,
\end{aligned}\end{equation}
where $\varepsilon_{1_i}\in\mathbb{R}^+$ denotes a positive known constant. 

\setcounter{theorem}{0}
\begin{theorem} \label{ITS.Th:main}
For the dynamic model described in \eqref{system_dynamics}, the controller developed in \eqref{control_signal} and \eqref{its.controller.P}, ensures asymptotically tracking, which means that the tracking errors in the Lyapunov functions converge to zero, given that the control gains satisfy the conditions in \eqref{sufficient-conditions}.
\end{theorem}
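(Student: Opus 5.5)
The plan is a Lyapunov argument on the cascaded error coordinates $(e_{i_1},r_{i_1},r_{i_2})$, using the closed-loop relations from the preliminary calculations:
\[
\dot e_{i_1}=r_{i_1}-\alpha_{i_1}e_{i_1},\qquad \dot r_{i_1}=r_{i_2}-\alpha_{i_2}r_{i_1},\qquad \dot r_{i_2}=-C_i\beta_i r_{i_2}-r_{i_1}.
\]
I would take the candidate
\[
V_{L_i}=\tfrac12 e_{i_1}^2+\tfrac12 r_{i_1}^2+\tfrac12 r_{i_2}^2 .
\]
It is continuously differentiable, positive definite and radially unbounded. The switching enters only through $\beta_i(u_i)$, so the derivative must be read in the Filippov sense of the preceding subsection:
\[
\dot{\tilde V}_{L_i}\in K[\,\cdot\,].
\]

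Differentiating along the relations above gives
\[
\dot V_{L_i}=e_{i_1}r_{i_1}-\alpha_{i_1}e_{i_1}^2+r_{i_1}r_{i_2}-\alpha_{i_2}r_{i_1}^2-C_i\beta_i r_{i_2}^2-r_{i_1}r_{i_2}.
\]
The $r_{i_1}r_{i_2}$ terms cancel; this is why the term $(\alpha_{i_1}\alpha_{i_2}+1)r_{i_1}$ appears in the control law \eqref{control_signal}. The remaining cross term is handled by Young's inequality,
\[
e_{i_1}r_{i_1}\le \frac{1}{2\varepsilon_{1_i}}e_{i_1}^2+\frac{\varepsilon_{1_i}}{2}r_{i_1}^2 ,
\]
which yields
\[
\dot V_{L_i}\le-\Big(\alpha_{i_1}-\tfrac{1}{2\varepsilon_{1_i}}\Big)e_{i_1}^2-\Big(\alpha_{i_2}-\tfrac{\varepsilon_{1_i}}{2}\Big)r_{i_1}^2-C_i\beta_i r_{i_2}^2 .
\]
The conditions \eqref{sufficient-conditions} make the first two coefficients positive.

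For the third term, the Filippov set-valued map gives
\[
K[\beta_i(u_i)]=\Big\{\tfrac{1+s}{2}\beta_{i_1}+\tfrac{1-s}{2}\beta_{i_2}:s\in \mathrm{SGN}(u_i)\Big\},
\]
and every element is a convex combination of $\beta_{i_1}$ and $\beta_{i_2}$. Hence every element is at least $\underline\beta_i:=\min(\beta_{i_1},\beta_{i_2})>0$. Since $C_i>0$, this gives, uniformly over the inclusion,
\[
\dot{\tilde V}_{L_i}\le -\lambda_i\|z_i\|^2,\qquad z_i=[e_{i_1},r_{i_1},r_{i_2}]^T,
\]
with $\lambda_i>0$ the minimum of the three coefficients. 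Lemma~\ref{lem:non_zeno} supplies finitely many switches on compact intervals, so solutions are absolutely continuous and piecewise classical, and the nonsmooth chain rule applies.

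The bound above gives exponential decay of $V_{L_i}$, hence exponential convergence of $e_{i_1},r_{i_1},r_{i_2}$ to zero. From the definitions \eqref{auxiliary_error} and \eqref{auxiliary_error_1}, $e_{i_2}$ and $e_{i_3}$ are linear combinations of these quantities, so they converge to zero as well. One could also argue via boundedness and Barbalat's lemma, but exponential decay makes that unnecessary.

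The main obstacle is rigor at the switching surface rather than the algebra. Specifically, $P_i=\beta_i(b_i\dot u_i+u_i)$ must actually be realizable by some $u_i$; this requires solving a first-order ODE for $u_i$ whose coefficient jumps with $\mathrm{sgn}(u_i)$. I would treat this as in the Filippov discussion: on each mode the ODE $\dot u_i=(P_i/\beta_i-u_i)/b_i$ is well-posed, and on $u_i=0$ the convexified inclusion is taken. I would invoke Remark~\ref{ass:regular_input} and Lemma~\ref{lem:non_zeno} to exclude accumulation of switches. The Lyapunov estimate itself holds mode-independently because only the lower bound $\underline\beta_i$ is used.
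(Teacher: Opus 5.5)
Your proposal follows the paper's proof essentially step by step: the same $V_L$, the same closed-loop relations ending in $\dot r_{i_2}=-C_i\beta_i r_{i_2}-r_{i_1}$, the same Young split with $\varepsilon_{1_i}$, and the same uniform lower bound $\min(\beta_{i_1},\beta_{i_2})$ over $\mathrm{SGN}(u_i)$ (this is the paper's $m_i$). The problem is a step that both you and the paper treat as routine. You write that, by Lemma~\ref{lem:non_zeno}, ``solutions are absolutely continuous and piecewise classical, and the nonsmooth chain rule applies''; the paper makes the same move when it treats $y=[e_{i_1}\ r_{i_1}\ r_{i_2}]^T$ as a Filippov solution. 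The plant signals $x_i,v_i,a_i,u_i$ are absolutely continuous, but the Lyapunov coordinate $r_{i_2}$ is not. We have $r_{i_2}=e_{i_3}+(\alpha_{i_1}+\alpha_{i_2})e_{i_2}+\alpha_{i_1}\alpha_{i_2}e_{i_1}$ and $e_{i_3}=a_{i-1}-a_i-b_i\dot a_i$, with $\dot a_i=-\gamma_i(u_i)a_i+\beta_i(u_i)u_i$. At a sign change $u_i(t_s)=0$, the $\beta$-term vanishes on both sides but the $\gamma$-term does not, so
\[
r_{i_2}(t_s^+)-r_{i_2}(t_s^-)=b_i\big(\gamma_i^{+}-\gamma_i^{-}\big)a_i(t_s),
\]
where $\gamma_i^{\pm}\in\{\gamma_{i_1},\gamma_{i_2}\}$ are the post- and pre-switch values. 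The identity $\dot r_{i_2}=-C_i\beta_i r_{i_2}-r_{i_1}$ was obtained by differentiating $\gamma_i(u_i)a_i$ as if $\gamma_i$ were constant, so it holds only between switches. As a function of the true state, $V_L$ is discontinuous on $\{u_i=0\}$. Convexifying the off-surface field with $\mathrm{SGN}(u_i)$ cannot represent this impulsive contribution. You were right that the difficulty sits at the switching surface, but it is this jump, not only the realizability of $P_i$.

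Consequently, $\dot V_L\le-\lambda\|y\|^2$ a.e.\ gives decay only on inter-switch intervals. At each switch $V_L$ changes by $r_{i_2}(t_s^-)\Delta+\tfrac12\Delta^2$, with $\Delta$ the jump above, and this can be positive. The non-Zeno lemma limits how often this happens, but not its cumulative effect on $[t_0,\infty)$. Moreover, a signal converging to zero must have vanishing jumps. So if the leader's maneuver forces infinitely many sign changes of $u_i$ at instants where $|a_i(t_s)|$ stays bounded away from zero, then $r_{i_2}\not\to 0$. This is generic for an oscillating leader, because $u_i=0$ only forces $\dot a_i+\gamma_i a_i=0$, not $a_i=0$. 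In that case the stated conclusion itself fails, so an extra hypothesis is unavoidable. Possible fixes are: only finitely many switches on $[t_0,\infty)$, after which your exponential decay takes over; or jumps that die out summably, e.g.\ $a_i(t_s)\to 0$ fast enough. Otherwise, a hybrid dwell-time estimate pitting $\kappa_{\min}$ against the decay rate yields only ultimate boundedness, with a bound proportional to $b_i|\gamma_{i_1}-\gamma_{i_2}|\sup_t|a_i(t)|$.
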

\begin{proof}

Let $y\triangleq[e_{i_1} ~~r_{i_1} ~~r_{i_2} ]^T \in \mathbb{D} \subseteq  \mathbb{R}^3$, and \( V_{L} : \mathbb{D} \times [0, +\infty) \to \mathbb{R} \) be a Lipschitz continuous, regular, positive definite function defined as
\begin{equation}\label{its.eq.proof1.1}
    V_L=\frac{1}{2}e_{i_1}^2+\frac{1}{2}r_{i_1}^2+\frac{1}{2}r_{i_2}^2,
\end{equation}
 which can satisfies the following inequalities
\begin{equation}\label{its.eq.proof1.3}
    U_{1}(y) \leq V_{L}(y) \leq U_{2}(y),
\end{equation}
where \( U_1(y) \) and \( U_2(y) \) are continuous positive definite functions, defined as
\begin{equation*}
    \begin{aligned}
    U_{1}(y) &= \frac{1}{2} \|y\|^2, \\
    U_{2}(y) &=  \|y\|^2.
\end{aligned}
 \end{equation*}

Since the dynamic model in \eqref{system_dynamics} is discontinuous in the set \( \{(y,t) \mid u = 0\} \), the existence and
stability of solutions cannot be analyzed using classical methods. To address this, we employ the Filippov differential inclusion, $\dot{y} \in F(y,t)$ where \( y \) is absolutely continuous, and \( F(\cdot) \) is Lebesgue measurable and locally bounded. Under Filippov's framework, a generalized Lyapunov stability theory can be applied to establish the robust stability \cite{filippov2013differential}. For almost all \( t \in [t_0, t_f] \), the generalized time derivative of \eqref{its.eq.proof1.1} exists almost everywhere (a.e.),
which can be written as
\begin{equation}\label{its.eq.proof1.100}
\dot{V}_L(y) \in \text{a.e. } \dot{\tilde{V}}_L(y),
\end{equation}
 ensuring that stability can be rigorously analyzed even in the presence of discontinuities in the system dynamics. Therefore,
\begin{equation} \label{its.eq.proof1.4}
    \dot{\tilde{V}}_L(y) = 
    \bigcap_{\chi \in \partial V_{L}(y)} 
    \chi^T K 
   \big[\dot{e}_{i_1}~~ \dot{r}_{i_1}~ ~\dot{r}_{i_2} \big]^T
\end{equation}
where
\begin{equation*}
    K[J](s, t) \triangleq \bigcap_{\delta > 0} \bigcap_{R(N) = 0} \overline{\text{co}} ~J(B(s, r)-N, t),
\end{equation*}
where $ \bigcap_{R(N) = 0} $ represents the intersection of all sets \( N \) of Lebesgue measure zero, \(\overline{\text{co}}\) denotes the convex closure, and \( B(s,r) \) is a ball of radius \( r \) centered at \( s \). Considering the fact that \( V_{L}(y) \) is a Lipschitz continuous regular function the operator is applicable. Therefore, we can simplify the RHS of \eqref{its.eq.proof1.4} as 
\begin{equation}\label{its.Stability.1}
\dot{\tilde{V}}_L(y)=  \big[e_{i_1}~~r_{i_1}~~r_{i_2} \big]K\big[\dot{e}_{i_1}~~\dot{r}_{i_1}~~\dot{r}_{i_2} \big]^T
\end{equation}
Utilizing the calculus framework for $K[.]$ as presented in \cite{paden1987calculus}, we can express \eqref{its.Stability.1} as follows
\begin{equation}\label{its.Stability_1}
\dot{\tilde{V}}_L(y) \subset e_{i_1}\dot{e}_{i_1}+r_{i_1}\dot{r}_{i_1}+r_{i_2}\dot{r}_{i_2}.
\end{equation}
using time derivatives of all errors in \eqref{errors_derivative}, \eqref{errors_derivatives}, and \eqref{errors_derivatives_5}
we can write \eqref{its.Stability_1} as
\begin{equation}\begin{aligned}\label{Stability_2}
\dot{\tilde{V}}_L(y) \subset &~ e_{i_1}(r_{i_1}-\alpha_{i_1}e_{i_1})+r_{i_1}(r_{i_2}-\alpha_{i_2}r_{i_1}) \\& ~~~+ r_{i_2}(-C_i\beta_i{r}_{i_2}-r_{i_1}),
\end{aligned}
\end{equation}
simplification of \eqref{Stability_2} gives
\begin{equation}\label{its.Stability_3}
-\alpha_{i_2}r_{i_1}^2-C_i\beta_i{r}_{i_2}^2-\alpha_{i_1}e_{i_1}^2+e_{i_1}r_{i_1}.
\end{equation}

By using \eqref{system_dynamics}, the expression in 
\eqref{its.Stability_3} can be written as
\begin{equation}\label{its.Stability_3.1}
\begin{aligned}
\dot{\tilde{V}}_L(y) \subset~&-\alpha_{i_2}r_{i_1}^2-\alpha_{i_1}e_{i_1}^2\\&-C_i(\frac{1 + K[\textit{sgn}(u_i)]}{2} \beta_{i_1} + \frac{1 - K[\textit{sgn}(u_i)]}{2} \beta_{i_2}){r}_{i_2}^2
\\&+e_{i_1}r_{i_1},
\end{aligned}
\end{equation}
where $K[\textit{sgn}(u_i)]=\text{SGN}(u_i)$ can be calculated as \cite{paden1987calculus}
\begin{equation*}
     \text{SGN}(u_i)=\begin{cases}
         1, ~~~~~~~~~~\text{if}~~ u_i >0,\\
         [-1,1] ~~~~~\text{if} ~~u_i =0,\\
         -1, ~~~~~~~~\text{if}~~ u_i <0.
     \end{cases}
\end{equation*}
Simplification of \eqref{its.Stability_3.1} yields
\begin{equation}\label{its.Stability_3.2}
\begin{aligned}
\dot{\tilde{V}}_L(y) \subset~&-\alpha_{i_2}r_{i_1}^2-\alpha_{i_1}e_{i_1}^2
+e_{i_1}r_{i_1}
\\&-\frac{C_i}{2}\bigg((\beta_{i_1} +\beta_{i_2})+ (\beta_{i_1} -\beta_{i_2}) \text{SGN}(u_i)\bigg){r}_{i_2}^2.
\end{aligned}
\end{equation}

Defining $\delta_i\triangleq\bigg((\beta_{i_1} +\beta_{i_2})+ (\beta_{i_1} -\beta_{i_2}) \text{SGN}(u_i)\bigg)$, we can write \eqref{its.Stability_3.2} as
\begin{equation}\label{its.Stability_3.2.2.3}
\begin{aligned}
\dot{\tilde{V}}_L(y) \subset&~-\alpha_{i_2}r_{i_1}^2-\alpha_{i_1}e_{i_1}^2
+e_{i_1}r_{i_1}
-\frac{\delta_i C_i}{2}{r}_{i_2}^2.
\end{aligned}
\end{equation}

\begin{remark}
\label{its.rem.1} As mentioned in Section \ref{cacc_model}, $\beta_{i_1}, \beta_{i_2} \geq 0 $, therefore we can write 
\begin{equation}\label{its.Stability_3.2.1}
\begin{aligned}
\delta_i\geq \big((\beta_{i_1} +\beta_{i_2})- |\beta_{i_1} -\beta_{i_2}|\big) > 0.
\end{aligned}
\end{equation}
\end{remark} 

By defining $m_i\triangleq \frac{1}{2}((\beta_{i_1} +\beta_{i_2})- |\beta_{i_1} -\beta_{i_2}|)$ and noting from Remark \ref{its.rem.1} that \( m_i \) is a positive constant, we can then apply \eqref{its.Stability_3.2.2.3} to obtain
\begin{equation}\begin{aligned}\label{Stability_4}
\dot{\tilde{V}}_L \leq & -\alpha_{i_2}\norm{r_{i_1}}^2-\alpha_{i_1}\norm{e_{i_1}}^2 +
\norm{e_{i_1}}\norm{r_{i_1}}\\&-C_im_i\norm{r_{i_2}}^2.
\end{aligned}
\end{equation}
Young's Inequality is applied to select terms in \eqref{Stability_4} as
\begin{equation}\label{Stability_5}
\norm{e_{i_1}}\norm{r_{i_1}} \leq \frac{1}{2\varepsilon_{1_i}}\norm{e_{i_1}}^2+\frac{\varepsilon_{1_i}}{2}\norm{r_{i_1}}^2, 
\end{equation}
using \eqref{Stability_5}, the inequality in \eqref{Stability_4} can be written as 
\begin{equation}\begin{aligned}\label{Stability_6}
\dot{\tilde{V}}_L(y)\leq & -\alpha_{i_2}\norm{r_{i_1}}^2-C_im_i\norm{r_{i_2}}^2-\alpha_{i_1}\norm{e_{i_1}}^2 \\&
+ \frac{1}{2\varepsilon_{1_i}}\norm{e_{i_1}}^2+\frac{\varepsilon_{1_i}}{2}\norm{r_{i_1}}^2.
\end{aligned}
\end{equation}

Simplifying \eqref{Stability_6} results 
\begin{equation}\begin{aligned}\label{Stability_7}
\dot{\tilde{V}}_L(y)\leq & -(\alpha_{i_2}-\frac{\varepsilon_{1_i}}{2})\norm{r_{i_1}}^2 -(\alpha_{i_1}-\frac{1}{2\varepsilon_{1_i}})\norm{e_{i_1}}^2 \\&
-C_im_i\norm{r_{i_2}}^2.
\end{aligned}
\end{equation} 
Satisfying the sufficient conditions in \eqref{sufficient-conditions} and Remark \ref{its.rem.1}, yields finding $U(y)\triangleq\lambda|| y||^2$ such that
\begin{equation}\begin{aligned}\label{Stability_8}
\dot{\tilde{V}}_L(y)\leq  -U(y)=-\lambda\norm{y}^2,\\
\end{aligned}
\end{equation}
where $\lambda=\text{min}\{\alpha_{i_2}-\frac{\varepsilon_{1_i}}{2}, \alpha_{i_1}-\frac{1}{2\varepsilon_{1_i}},  C_im_i\}$. The inequality in  \eqref{Stability_8} will lead to
\begin{equation}\begin{aligned}\label{its.Stability_9}
\dot{{V}}_L(y)\leq-U(y),~~~~~~~\forall \dot{V}(y) \mathrel{\overset{\text{a.e.}}{\in}} \dot{\tilde{V}}(y)
\end{aligned}
\end{equation}
which results in
\begin{equation}\begin{aligned}\label{its.Stability_10}
\dot{{V}}_L(y)\leq-\lambda \norm{y}^2,
\end{aligned}
\end{equation}
by considering bonds in \eqref{its.eq.proof1.3}, we can write \eqref{its.Stability_10} as
\begin{equation}\begin{aligned}\label{its.Stability_11}
\dot{{V}}_L(y)\leq-\lambda V_L
\end{aligned}
\end{equation}

By designing the controller is derived from \eqref{control_signal} and considering \eqref{its.Stability_11}, which means $\dot{V}$ is made negative definite, ensuring $e_{i_1}, r_{i_1},$ and $r_{i_2}$ converges to zero, thus achieving asymptotic tracking.

\end{proof}

\begin{remark}[Filippov solutions and stability of the switched dynamics]
\label{rem:filippov_stability} Under Remark~\ref{ass:regular_input}, the closed-loop EV dynamics define a Filippov differential inclusion as in  with a locally bounded, measurable right-hand side. Together with the non-Zeno
property established in Lemma~\ref{lem:non_zeno}, this guarantees the existence
of complete, absolutely continuous Filippov solutions for all initial
conditions of interest.

The generalized derivative $\dot{\tilde V}_L$, is computed with respect
to this Filippov inclusion, using $K[\textit{sgn}(u_i)] = \mathrm{SGN}(u_i)$ on the
switching surface. Therefore, Theorem~\ref{ITS.Th:main} establishes
asymptotic tracking and time-domain string stability for all Filippov solutions
of the switched EV dynamics, while ensuring that no Zeno or chattering
behavior arises from the regenerative braking switching mechanism.
\end{remark}

\section*{acknowledgment}

Partial support for this research was provided by the National Science Foundation under Grant No. ECCS-EPCN-2241718. Any opinions, findings, conclusions, or recommendations expressed in this material are those of the author(s) and do not necessarily reflect the views of the sponsoring agency.

\bibliography{Ref.bib}
\bibliographystyle{ieeetr}
\end{document}